\documentclass[11pt]{article}

\usepackage{amsmath,amssymb,amsthm}
\usepackage[margin=1in]{geometry}
\usepackage{graphicx}
\usepackage{algorithm}
\usepackage{algpseudocode}
\usepackage{comment}
\usepackage{color}
\usepackage{bbm}
\usepackage{todonotes}
\usepackage{booktabs}
\usepackage{breakcites}
\usepackage{bm}
\usepackage[numbers,sort&compress]{natbib}
\usepackage{thm-restate}
\usepackage{calrsfs}
\usepackage{tcolorbox}
\usepackage{subcaption}
\usepackage{mathtools}
\usepackage[colorlinks=true, citecolor=orange, linkcolor=orange]{hyperref}
\usepackage[numbers,sort&compress]{natbib}

\theoremstyle{plain}
\newtheorem{theorem}{Theorem}[section]
\newtheorem{lemma}[theorem]{Lemma}

\newtheorem{corollary}[theorem]{Corollary}

\newtheorem{proposition}[theorem]{Proposition}

\theoremstyle{definition}
\newtheorem{definition}[theorem]{Definition}
\newtheorem{example}[theorem]{Example}

\theoremstyle{remark}
\newtheorem{remark}[theorem]{Remark}

\DeclareMathAlphabet{\pazocal}{OMS}{zplm}{m}{n}

\newcommand{\R}{\mathbb{R}}

\newcommand{\E}{\mathop{\mathbb{E}}}

\newcommand{\cF}{\pazocal{F}}
\newcommand{\cM}{\pazocal{M}}

\newcommand{\cI}{\pazocal{I}}

\newcommand{\cL}{\pazocal{L}}

\newcommand{\cP}{\pazocal{P}}

\renewcommand \vec [1]{\bm{#1}}
\newcommand{\norm}[1]{\lVert#1\rVert}

\newcommand{\OPT}{\textsf{OPT}}

\newcommand{\ttheta}{\widetilde{\theta}}

\newcommand{\btheta}{\boldsymbol{\theta}}

\newcommand{\bTheta}{\boldsymbol{\Theta}}

\newcommand{\VCG}{\mathtt{VCG}}
\newcommand{\err}{\mathtt{err}}
\newcommand{\WT}{\mathtt{WT}}

\DeclareMathOperator*{\argmax}{argmax}
\DeclareMathOperator*{\argmin}{argmin}

\usepackage{centernot}

\title{\bf Bicriteria Multidimensional Mechanism Design \\ with Side Information}
\author{Maria-Florina Balcan\thanks{School of Computer Science, Carnegie Mellon University. \texttt{ninamf@cs.cmu.edu}}\and Siddharth Prasad\thanks{Toyota Technological Institute at Chicago. \texttt{sprasad@ttic.edu}}\and Tuomas Sandholm\thanks{Computer Science Department, Carnegie Mellon University, Strategy Robot, Inc., Optimized Markets, Inc., Strategic Machine, Inc. \texttt{sandholm@cs.cmu.edu}}}

\date{}

\begin{document}

\maketitle

\begin{abstract}
We develop a versatile methodology for multidimensional mechanism design that incorporates side information about agents to generate high welfare and high revenue simultaneously. Side information sources include advice from domain experts, predictions from machine learning models, and even the mechanism designer's gut instinct. We design a tunable mechanism that integrates side information with an improved VCG-like mechanism based on {\em weakest types}, which are agent types that generate the least welfare. We show that our mechanism, when its side information is of high quality, generates welfare and revenue competitive with the prior-free total social surplus, and its performance decays gracefully as the side information quality decreases. We consider a number of side information formats including distribution-free predictions, predictions that express uncertainty, agent types constrained to low-dimensional subspaces of the ambient type space, and the traditional setting with known priors over agent types. In each setting we design mechanisms based on weakest types and prove performance guarantees.
\end{abstract}

\section{Introduction}\label{sec:Intro}

Mechanism design is a high-impact branch of economics, computer science, and operations research that studies the implementation of socially desirable outcomes among strategic self-interested agents. Major real-world use cases include combinatorial auctions~\citep{Cramton06:Combinatorial} (\emph{e.g.}, for strategic sourcing~\citep{Sandholm13:Very,Hohner03:Combinatorial} and radio spectrum auctions~\citep{cramton2013spectrum,bichler2017handbook,leyton2017economics}), matching markets~\citep{roth2018marketplaces} (\emph{e.g.}, for housing allocation and ridesharing), project fundraisers, and many more. The two most commonly studied objectives in mechanism design are \emph{welfare} and \emph{revenue}. Welfare maximization, or \emph{efficiency}, is canonically achieved by the classic Vickrey-Clarke-Groves (VCG) mechanism~\citep{Vickrey61:Counterspeculation,Clarke71:Multipart,Groves73:Incentives}. Revenue maximization is a significantly more elusive problem that is only understood in very special cases. The seminal work of~\citet{Myerson81:Optimal} characterized the revenue-optimal mechanism for the sale of a single item in the Bayesian setting, but it is not even known how to optimally sell two items. It is known that welfare and revenue are generally at odds and optimizing one can come at a great expense of the other~\citep{Ausubel06:Lovely,abhishek2010efficiency,anshelevich2016pricing, kleinberg2013ratio, diakonikolas2012efficiency}.  

In this paper we study how \emph{side information} (or \emph{predictions}) about the agents can help with \emph{bicriteria} optimization of both welfare and revenue. Side information can come from a variety of sources that are abundantly available in practice such as predictions from a machine-learning model trained on historical agent data, advice from domain experts, or even the mechanism designer's own gut instinct. Machine learning approaches that exploit the proliferation of agent data have in particular witnessed a great deal of success both in theory~\citep{Conitzer02:Complexity,Likhodedov04:Methods,morgenstern2016learning,munoz2017revenue,balcan2018general, balcan2005mechanism} and in practice~\citep{Edelman07:Internet,Sandholm07:Expressive,Walsh08:Computing,dutting2019optimal,Sandholm13:Very}. In contrast to the typical Bayesian approach to mechanism design that views side information through the lens of a prior distribution over agents, we adopt a prior-free perspective that makes no assumptions on the correctness, accuracy, or source of the side information (though we show how to apply the techniques developed in this paper to that setting as well). A nascent line of work (part of a larger agenda on {\em learning-augmented algorithms}~\citep{mitzenmacher2022algorithms}) has begun to examine the challenge of improving the performance of classical mechanisms with strategic participants when the designer has access to predictions about the agents, but only for fairly specific problem settings~\citep{xu2022mechanism,banerjee2022online,balkanski2022strategyproof,gkatzelis2022improved,agrawal2022learning}. Algorithm and mechanism design with predictions takes a {\em beyond worst case} perspective on performance analysis, the primary motivation being the access to machine-learning predictions about the problem instance that can greatly improve performance ({\em e.g.,} run-time, memory, revenue, welfare, fairness, {\em etc.}) beyond what is possible in the worst case. This is in contrast to the {\em worst-case} perspective that has been traditional in classical algorithm design and theoretical computer science. We contribute to this budding area of {\em mechanism design with predictions} (also called {\em learning-augmented mechanism design}) with a new general side-information-dependent mechanism for a wide swath of multidimensional mechanism design problems that aim for high social welfare and high revenue.

Here we provide a few examples of the forms of side information we consider in various multidimensional mechanism design scenarios. A formal description of the model is in Section~\ref{section:problem_formulation}.

\emph{(1) An auction designer sets a bidder-specific item reserve price of \$10 based on historical knowledge that the bidder is a high-spending bidder and the observation that all other bids received for that item are tightly clustered around \$10. (2) A real-estate agent believes that a particular buyer values a high-rise condominium with a city view three times more than one on the first floor. Alternately, the seller might know for a fact that the buyer values the first property three times more than the second based on set factors such as value per square foot. (3) A homeowner association is raising funds for the construction of a new swimming pool within a townhome complex. Based on the fact that a particular resident has a family with children, the association estimates that this resident is likely willing to contribute at least \$300 if the pool is opened within a block of the resident's house but only \$100 if outside a two-block radius.} 

These are all examples of side information available to the mechanism designer that may or may not be useful or accurate; we include many more such examples throughout to reinforce central concepts. Our methodology allows us to derive welfare and revenue guarantees under different assumptions on the veracity of the side information. Our model of side information involves a general and flexible language wherein nearly any claim of the form {\em ``the joint type profile of the agents satisfies property $P$"} can be expressed and meaningfully used. We study some other forms of side information as well which we describe further in Section~\ref{sec:contributions}.

\subsection{Our contributions}\label{sec:contributions}

Our main contribution is a versatile tunable mechanism that integrates side information about agent types with the bicriteria goal of simultaneously optimizing welfare and revenue. Traditionally it is known that welfare and revenue are at odds and maximizing one objective comes at the expense of the other. Our results show that side information can help mitigate this difficulty. 

\paragraph{Prediction model, type spaces, and weakest-type VCG} In Section~\ref{section:problem_formulation}, we introduce our model of predictors and formally define the components of multidimensional mechanism design. The abstraction of multidimensional mechanism design is a rich language that allows our theory to apply to many real-world settings including combinatorial auctions, matching markets, project fundraisers, and more---we expand on this list of examples further in Section~\ref{section:problem_formulation}. Our model of predictions is highly general and flexible: a prediction is effectively any claim of the form ``agent types satisfy property $P$", modeled via a set-valued function. One key aspect here is that the information conveyed about an agent {\em can depend on the revealed types of all other agents}. For example, a landowner who wants to sell mineral rights via an auction might not know the true market values of the natural resources present in his land, but might expect bids to be clustered around a high value or a low value based on the bidders' (who represent entities with domain expertise in, say, mining for rare materials) assessment of the resource quality. Then, based on a subset of revealed bids, the landowner can set informed reserve prices for other bidders to increase his revenue.

We then discuss an improvement to the VCG mechanism we call the {\em weakest-type VCG mechanism} (Section~\ref{sec:wcvcg}). While vanilla VCG charges an agent her externality measured relative to the welfare achievable by her non-participation, weakest-type VCG charges an agent her externality relative to the welfare achievable by the {\em weakest type} consistent with what the mechanism designer already knows about that agent. An agent's weakest type is the type from her type space that has the smallest impact on the efficient welfare. This idea is due to~\citet{krishna1998efficient}, who showed that a Bayesian version of the weakest-type mechanism is revenue optimal among all efficient, Bayes incentive compatible, and Bayes individually rational mechanisms (assuming the mechanism designer has access to the prior distribution from which agents' values are drawn). Our weakest-type VCG mechanism adapts Krishna and Perry's mechanism to a prior-free setting and allows for a more general model of agent type spaces, and we prove that it is revenue optimal subject to efficiency, (ex-post) incentive compatibility, and (ex-post) individual rationality.

We show how the payment scheme implemented by the weakest-type VCG mechanism lends a natural interpretation in terms of information rents for the agents. Indeed, vanilla VCG is equivalent to a pay-as-bid/first-price payment scheme with agent discounts equal to the welfare improvement they create for the system. Weakest-type VCG is equivalent to a pay-as-bid scheme with agent discounts equal to the welfare improvement they create for the system {\em over the welfare created by their weakest type}. That difference in welfare is precisely an agent's {\em information rent}: the less private information she holds to ``distinguish herself" from the weakest type, the smaller her discount.

\paragraph{Measuring prediction quality and weakest type computation} 

In Section~\ref{sec:quality} we devise an appropriate error measure to quantify the quality of a prediction that is intimately connected to the weakest-type mechanism described above---it is precisely the {\em predicted information rent}. We establish a number of properties of predictors under this error measure, illustrating that information rent---by reducing a prediction to a single number measuring the welfare of its weakest type---serves as a unifying way to analyze predictors, no matter how complex. One such property is that the ability of a prediction to boost revenue in our framework is not captured by some obvious measures of goodness. For example, one might expect that a good prediction should say something correct about an agent's true type---if it claims that an agent's type/valuation satisfies property $P$ we would expect that her true type ought to satisfy property $P$ for it to be a useful prediction. Rather counter-intuitively, even if the true type does not satisfy property $P$ our framework can gain meaningful revenue mileage from that prediction. However, we do demand from our mechanisms the following economic desideratum: whenever a prediction does say something correct about agents' true types, the mechanism ought to be efficient. We term this constraint {\em welfare consistency}. Welfare consistency is the natural relaxation of efficiency for mechanisms with predictors.

We also briefly discuss the computational complexity of finding weakest types (Section~\ref{sec:computation}). We show that weakest type computation can be formulated as a linear program with size equal to the size of the allocation space, which implies that weakest types can be found in time polynomial in the parameters defining the mechanism design environment. In environments where the size of the allocation space is prohibitively large ({\em e.g.,} in a combinatorial auction there are $(n+1)^m$ possible allocations of $n$ items to $m$ bidders and the seller), we show that the solution to the linear program can nonetheless be computed with polynomially many queries to an oracle for computing efficient allocations.

\paragraph{Prediction-augmented mechanisms and guarantees}
In Section~\ref{sec:main_guarantees} we present our main mechanism. It uses the information output by the predictors within the weakest-type VCG mechanism, but modifies that information via two tunable parameters. The parameters control a randomized relaxation of the prediction drawn from an exponential discretization of an agent's possible welfare levels.

We prove that our mechanism achieves strong welfare and revenue guarantees that are parameterized by errors in the predictions and the quality of the parameter tuning. Our benchmark for both welfare and revenue is the efficient welfare, denoted by $\OPT$. Our mechanism is {\em welfare consistent}, so whenever predictors are correct it attains the efficient welfare $\OPT$. When predictions are high quality, its revenue is competitive with $\OPT$. Its performance degrades gradually as the quality of predictions worsen.

{\em Prior-free efficient welfare $\OPT$, or total social surplus, is the strongest possible benchmark for both welfare and revenue.} We show that simplistic approaches that solely optimize for the {\em consistency} and {\em robustness} desiderata that have been studied in the mechanism design (and algorithm design more broadly) with predictions literature are brittle and overly sensitive to prediction errors (and it is in fact trivial to attain strong consistency and robustness bounds---hence our focus is on error-tolerant mechanisms). Our mechanism provides a more flexible, general, and robust alternative. Finally, we establish impossibility results on how much revenue a welfare consistent mechanism with predictors can extract. These results show that the performance of our main mechanism (which is welfare consistent) cannot be substantially improved.

\paragraph{Other forms of side information} In Section~\ref{sec:other_forms} we apply the weakest-type mechanism to three other formats of side information. First, in Section~\ref{sec:other_forms+uncertainty}, we describe a more general model of predictions that can express arbitrary degrees of uncertainty over an agent's type. Here, we generalize the main randomized mechanism described previously. Second, in Section~\ref{section:other_forms+subspace}, we derive new results in a setting where each agent's type is determined by a constant number of parameters. Specifically, agent types lie on constant-dimensional subspaces (of the potentially high-dimensional ambient type space) that are known to the mechanism designer (this is markedly different from our model of predictions and should be viewed as a restriction on the agent's type space itself). \emph{For example, a real-estate agent might infer a buyers' relative property values based on value per square foot.} When each agent's true type is known to lie in a particular $k$-dimensional subspace of the ambient type space, we modify the weakest-type mechanism by choosing weakest types randomly from a careful discretization of the subspace, to obtain revenue at least $\Omega(\OPT/k(\log H)^{k})$ while simultaneously guaranteeing welfare at least $\OPT/\log H$, where $H$ is an upper bound on any agent's value. Third, in Section~\ref{sec:other_forms+myerson}, we consider a textbook multidimensional mechanism design setup where the side information is in the form of a known prior distribution over agent types. Here, we show that the {\em revenue-optimal} (with no constraints other than incentive compatibility and individual rationality) Groves mechanism can be found by solving a separate single-parameter optimization problem for each agent. The optimization involves each agent's weakest type. Interestingly, our formulation recovers the optimal single-item auction of~\citet{Myerson81:Optimal} in a special case despite the fact that it is not globally revenue optimal in general.

\paragraph{Extension to affine-maximizer mechanisms} Finally, we show how the weakest-type framework can be extended beyond VCG to the class of affine-maximizer (AM) mechanisms~\citep{Roberts79:characterization}. We describe at least two attractive uses of a weakest-type AM in place of weakest-type VCG. First, it is well-known that in settings where agent types are drawn from a prior distribution, AMs can generate significantly more revenue than VCG. For example, work on {\em sample-based automated mechanism design}~\citep{sandholm2015automated,balcan2018general,curry2023differentiable} has shown that high-revenue AM parameters can be learned from data. Our techniques can then be appended as a post-processor to further improve the revenue of an already-tuned AM. Second, in many application domains it might make more sense to implement an allocation that maximizes a {\em weighted} version of welfare. For example, the mechanism designer might want to prioritize allocations that adequately reward small or minority-owned business (captured by multiplicative bidder weights). As another example, an auction designer might derive value from keeping some of the items for himself or from offering items for other uses outside the auction. In such cases, allocations that leave some items unsold might be prioritized if no bidder values those items competitively (captured by additive allocation boosts). We show how our main results extend to AMs, with appropriately redefined welfare and revenue benchmarks.

\subsection{Related work} We survey related work, discuss work subsequent to our initial conference publication~\citep{balcan2023bicriteria}, and briefly compare the present paper to our initial publication.

\paragraph{Side information in mechanism design} Various mechanism design settings have been studied under the assumption that some form of public side information is available. \citet{munoz2017revenue} study single-item (unlimited supply) single-bidder posted-price auctions with bid predictions. \citet{devanur2016sample} study the sample complexity of (single-parameter) auctions when the mechanism designer receives a distinguishing signal for each bidder. More generally, the active field of \emph{algorithms with predictions} aims to improve the quality of classical algorithms when machine-learning predictions about the solution are available~\citep{mitzenmacher2022algorithms}. There have been recent explicit connections of this paradigm to settings with strategic agents~\citep{agrawal2022learning, gkatzelis2022improved, balkanski2022strategyproof,banerjee2022online}. Most related to our work,~\citet{xu2022mechanism} study auctions for the sale of a (single copy of a) single item when the mechanism designer receives point predictions on the bidders' values. Unlike our approach, they focus on \emph{deterministic} modifications of a second-price auction. An important drawback of determinism is that revenue guarantees do not decay continuously as prediction quality degrades. For agents with values in $[1,H]$ there is an error threshold after which, in the worst case, only a $1/H$-fraction of revenue can be guaranteed (a vacuous guarantee not even competitive with VCG---in contrast we provide worst case robustness guarantees that are competitive with VCG, and even better in some cases). \citet{xu2022mechanism} prove that such a revenue drop is unavoidable by deterministic mechanisms. Finally, our setting is distinct from, but similar to in spirit, work that uses public attributes for market segmentation to improve revenue~\citep{balcan2005mechanism,balcan2020efficient}.

\paragraph{Welfare-revenue tradeoffs in auctions} Welfare and revenue relationships in Bayesian auctions have been widely studied since the seminal work of~\citet{bulow1996auctions}. Welfare-revenue tradeoffs for second-price auctions with reserve prices in the single item setting have been quantified~\citep{hartline2009simple,daskalakis2011simple}, with some approximate understanding of the Pareto frontier~\citep{diakonikolas2012efficiency}. \citet{anshelevich2016pricing} study welfare-revenue tradeoffs in large markets,~\citet{aggarwal2009efficiency} study the efficiency of revenue-optimal mechanisms, and~\citet{abhishek2010efficiency} study the efficiency loss of revenue-optimal mechanisms.

\paragraph{Weakest types} The idea behind the weakest-type VCG mechanism is due to~\citet{krishna1998efficient} but the notion of a ``worst-off" type was studied before that in the context of bilateral (and more general) trade by~\citet{Myerson83:Efficient} and~\citet{cramton1987dissolving}. That study identified the worst-off type in a trading environment and used that characterization to characterize individually rational trading mechanisms---the connection to individual rationality is similar to the one we draw in Theorem~\ref{theorem:rev_optimal}.

\paragraph{Distribution-free agent type models} Our primary model of predictions is {\em distribution free} in that a prediction puts forward a postulate that an agent's true type belongs to some set, but conveys no further distributional information over that set. Such models of agent types have been previously studied, though in different contexts, by~\citet{hyafil2004regret} for minimax optimal automated mechanism design,~\citet{Holzman04:Bundling} to understand equilibria in combinatorial auctions, and~\citet{chiesa2015knightian} for agents who have uncertainty about their own private types.

\paragraph{Constant-parameter mechanism design} Revenue-optimal mechanism design for settings where each agent’s type space is of a constant dimension has been studied previously in certain specific settings. Single-parameter mechanism design is a well-studied topic dating back to the seminal work of~\citet{Myerson81:Optimal}, who (1) characterized the set of all truthful allocation rules and (2) derived the Bayesian optimal auction based on virtual values (a quantity that is highly dependent on knowledge of the agents' value distributions). \citet{kleinberg2013ratio} prove revenue guarantees for a variety of single-parameter settings that depend on distributional parameters. Financially constrained buyers with two-parameter valuations have also been studied~\citep{malakhov2009optimal,pai2014optimal}.


\paragraph{Combinatorial auctions for limited supply} Our mechanism when agent types lie on known linear subspaces can be seen as a generalization of the well-known logarithmic revenue approximation that is achieved by a second-price auction with a random reserve price in the single-item setting~\citep{Goldberg01:Competitive}. Similar revenue approximations have been derived in multi-item settings for various classes of bidder valuation functions such as unit-demand~\citep{Guruswami05:profit}, additive~\citep{sandholm2015automated,Likhodedov05:Approximating,babaioff2020simple}, and subadditive~\citep{Balcan08:Item, chakraborty2013dynamic}. To the best of our knowledge, no previous techniques handle agent types on low-dimensional subspaces. Furthermore, our results are not restricted to combinatorial auctions unlike most previous research.

\paragraph{Sample-based mechanism design} Sandholm and Likhodedov~\citep{sandholm2015automated,Likhodedov04:Methods,Likhodedov05:Approximating} introduced the idea of automatically tuning the parameters of a mechanism based on samples of valuations. They also formulated mechanism design as a search problem within a parameterized family. They developed custom hill-climbing methods for learning various high-revenue auction formats, including affine maximizers, which we study in this paper. ~\citet{balcan2005mechanism} were the first to apply tools from machine learning theory to establish theoretical guarantees on the {\em sample complexity} of mechanism design. Since then, there has been a large body of work on both theoretical and practical aspects of data-driven mechanism design, for example, studying sample complexity of revenue maximization for various auction and mechanism families~\citep{morgenstern2016learning,cole2014sample,balcan2018general}, sample complexity of estimation of the degree of manipulability of a mechanism~\citep{balcan2019estimating,pieroth2024verifying}, and deep learning to design high-revenue mechanisms~\citep{dutting2019optimal,duan2024scalable,curry2023differentiable}. 

Beyond mechanism design, the field of {\em data-driven algorithm design}~\citep{balcan2020data} establishes theoretical foundations for a strand of what was largely empirical work on sample-based algorithm configuration (that predates sample-based mechanism design, {\em e.g.,}~\citet{Horvitz01:Bayesian}). This line of work is thematically similar to ours, but the main focus of our paper (and of the mechanism design with predictions literature more broadly) is to quantify the performance improvement obtainable based on the quality of predictions (that might have been obtained through learning from data). \citet{khodak2022learning} show how to ``learn the predictions in algorithms with predictions", thus concretely connecting the two research strands.

\paragraph{Work subsequent to our initial publication} After our initial conference publication~\citep{balcan2023bicriteria}, a number of papers have continued to grow the area of mechanism design with predictions. Still, this literature has largely focused on single-parameter settings. Specific applications include online single-item auctions~\citep{balkanski2024online}, single-parameter clock auctions~\citep{gkatzelis2025clock}, and randomized single-item auctions~\citep{caragiannis2024randomized}. (Unlike~\citet{caragiannis2024randomized}, we do not require knowledge of an upper bound on agent valuations. Their analysis and results hinges on a version of Myerson's lemma that does not generalize beyond single-parameter settings.)~\citet{lu2024competitive} use the idea of randomly modifying weakest types, that we proposed in the original conference version of the present paper~\citep{balcan2023bicriteria}, for error-tolerant design in a variety of single-parameter settings such as digital good auctions and auctions for the sale of multiple copies of a single item.~\citet{christodoulou2024mechanism} study a model where predictions are made directly on the mechanism's output. Specifically, the mechanism designer receives a prediction positing what the efficient allocation is, and the goal is to use that prediction to alleviate the computational cost of finding the true efficient allocation. We show that their model of ``output advice" is compatible with our model of predictors. Furthermore, we prove that qualitative predictions about the efficient allocation (like those studied by~\citet{christodoulou2024mechanism}) are too weak to improve revenue beyond VCG (the formal statement and proof are deferred to the appendix). Finally, \citet{balcan2025increasing} show how to learn the welfare level of weakest types from data,~\citet{prasad2026weakest} design new type-space-dependent core-selecting auctions and develop constraint generation algorithms for computing weakest types, and~\citet{prasad2025revenue} derive the revenue-optimal efficient mechanism when type spaces are completely general (the construction is a generalization of weakest-type VCG).

\paragraph{Comparison to our initial publication} The key differences between the present work and our initial conference publication~\citep{balcan2023bicriteria} are as follows: (i) our notion of prediction error is simpler (it only depends on the predicted welfare created, while the old notion of error depended on the geometry of the prediction) and provides a complete characterization of predictions that achieve a given payment (the old notion of error would have deemed some predictors unusable, but our notion of error can use such predictors to extract meaningful payments); (ii) our main mechanism is stronger---in particular it is always at least as good as the old version and in most cases it is much better---and our analysis yields stronger revenue guarantees; (iii) we provide
impossibility results that shed light on barriers to our main mechanisms; (iv) we apply the weakest-type idea to formulate the revenue-optimal Groves mechanism in multidimensional settings, and show that in a special case of the single-item setting our formulation recovers Myerson's optimal auction~\citep{Myerson81:Optimal} (this contribution is new and was not present in our initial publication); and (v) we include an extended discussion of weakest-type affine maximizer mechanisms, including an extension of the revenue-optimality of weakest-type VCG subject to efficiency to the affine maximizer case (our initial publication contained a definition of a weakest-type affine maximizer mechanism, but the revenue optimality result is new and a contribution of the present work).

\section{Problem formulation, example applications, and weakest-type VCG}
\label{section:problem_formulation}

We consider a general multidimensional mechanism design setting with a finite allocation space $\Gamma$ and $n$ agents. $\Theta_i$ is the ambient type space of agent $i$. Agent $i$'s true private type $\theta_i\in\Theta_i$ determines her value $v(\theta_i, \alpha)$ for allocation $\alpha\in\Gamma$. We will interpret $\Theta_i$ as a subset of $\R^{\Gamma}$, so $\theta_i[\alpha] = v(\theta_i, \alpha)$. We use $\btheta\in\bigtimes_{i=1}^n\Theta_i$ to denote a profile of types and $\btheta_{-i}\in\Theta_{-i}\coloneq\bigtimes_{j\neq i}\Theta_i$ to denote a profile of types excluding agent $i$. We now introduce our main model of side information which is the focus of Sections~\ref{sec:quality} and~\ref{sec:main_mechanism} (we discuss other models in Section~\ref{sec:other_forms}). First, the {\em ambient type space} $\Theta_i$ is assumed to convey no information about each agent, that is, $\Theta_i = \R_{\ge 0}^{\Gamma}$ (this is the standard assumption in the mechanism design literature). The mechanism designer has access to a set-valued {\em predictor} $T_i:\bigtimes_{j\neq i}\Theta_j\to\cP(\Theta_i)$ for each agent. Predictor $T_i$ takes as input {\em the revealed types $\btheta_{-i}$ of all agents excluding $i$} and outputs a set $T_i(\btheta_{-i})\subseteq\Theta_i$ that represents a prediction that agent $i$'s true type lies in $T_i(\btheta_{-i})$. The mechanism designer has no apriori guarantees about the quality (or validity) of the prediction output by $T_i$. {\em We emphasize that a prediction about agent $i$ can depend on the revealed types of all other agents.} This begets a rich and expressive language of prediction that can incorporate, for example, market insights by means of analyzing other agents' true types (a form of learning within an instance~\citep{baliga2003market,balcan2005mechanism,balcan2021learning}), interlinked knowledge about multiple agents' types, {\em etc.} Importantly---as we will later demonstrate---allowing predictions to carry such a great level of expressive power poses no barrier to incentive compatibility of our mechanisms. Finally, we point out that the modeling assumption of a separate predictor for each agent does not in any way restrict the ability of side information to convey relationships between the types of different agents: indeed, the claim that ``agent types $(\theta_1,\ldots,\theta_n)$ satisfy property $P$" induces predictors $T_1,\ldots, T_n$ where $T_i$ is defined by $T_i(\btheta_{-i}) = \{\hat{\theta}_i : (\hat{\theta_i}, \btheta_{-i})\text{ satisfies } P\}$.

A mechanism with predictors is specified by an allocation rule $\alpha(\btheta; T_1,\ldots,T_n)\in\Gamma$ and a payment rule $p_i(\btheta; T_1,\ldots, T_n)\in\R$ for each agent $i$. We assume agents have quasilinear utilities. A mechanism is \emph{incentive compatible (IC)} if $\theta_i\in\argmax_{\theta_i'\in\Theta_i} \theta_i[\alpha(\theta_i',\btheta_{-i};T_1,\ldots, T_n)] - p_i(\theta_i',\btheta_{-i};T_1,\ldots,T_n)$ holds for all $i, \theta_i\in\Theta_i, \btheta_{-i}\in\Theta_{-i}$, that is, agents are incentivized to report their true type regardless of what other agents report 
(this definition is equivalent to the usual notion of dominant-strategy IC and simply stipulates that predictors ought to be used in an IC manner). A mechanism is \emph{individually rational (IR)} if $\theta_i[\alpha(\theta_i,\btheta_{-i};T_1,\ldots,T_n)] -p_i(\theta_i,\btheta_{-i};T_1,\ldots,T_n)\ge 0$ holds for all $i,\theta_i,\btheta_{-i}$. We will analyze a variety of randomized mechanisms that randomize over IC and IR mechanisms. Such randomized mechanisms are thus IC and IR in the strongest possible sense (as supposed to Bayes-IC/IR which is weaker and holds only in expectation). {\em An important note: no assumptions are made on the veracity of $T_i$, and agent $i$'s misreporting space is the full ambient type space $\Theta_i$.}

Given reported types $\btheta$, define $w(\btheta) = \max_{\alpha\in\Gamma}\sum_{i=1}^n\theta_i[\alpha]$ to be the {\em efficient} social welfare. Let $\alpha^* = \alpha^*(\btheta) =\argmax_{\alpha}\sum_{i=1}^n\theta_i[\alpha]$ denote the {\em efficient} allocation, that is, the allocation that achieves $w(\btheta)$. Our benchmark for welfare and revenue is the prior-free efficient welfare $\OPT = w(\btheta)$, which is the strongest possible benchmark for both welfare and revenue.

Before presenting applications, we conclude the setup with an economic desideratum---{\em welfare consistency}---we require of our mechanisms. It is the natural relaxation of efficiency for a mechanism with predictors. Welfare consistency demands that the mechanism with predictors be efficient if the predictors say something that is correct.
\begin{definition}[Welfare consistency]
    A mechanism with predictors is {\em welfare consistent} if, whenever $\theta_i\in T_i(\btheta_{-i})$ for all $i$, it implements the efficient allocation and obtains the efficient welfare $\OPT$.
\end{definition}

\subsection{Example applications}
\label{subsection:examples}

Our model of side information within the rich language of multidimensional mechanism design allows us to capture a variety of different problem scenarios where both welfare and revenue are desired objectives. We list a few examples of different multidimensional mechanism settings along with examples of different varieties of predictions.

\begin{itemize}
    \item Combinatorial auctions: There are $m$ indivisible items to be allocated among $n$ bidders (or to no one). The allocation space $\Gamma$ is the set of $(n+1)^m$ allocations of the items and $\theta_i[\alpha]$ is bidder $i$'s value for the bundle of items she is allocated by $\alpha$. Let $X$ and $Y$ denote two of the items for sale. The predictor $T_i(\btheta_{-i}) = \{\theta_i : \theta_i[XY]\ge 3/2\cdot \theta_j[XY], \theta_i[XY]\ge \theta_i[X] + \theta_i[Y]\}$ represents the prediction that bidder $i$'s value for the bundle $XY$ is at least $50\%$ greater than bidder $j$'s value for the same bundle and that items $X$ and $Y$ are complements for her. Here, $T_i(\btheta_{-i})$ is the intersection of linear constraints. 
    \item Matching markets: There are $m$ items (\emph{e.g.}, houses) to be matched to $n$ buyers. The allocation space $\Gamma$ is the set of matchings on the bipartite graph $K_{m,n}$ and $\theta_i[\alpha]$ is buyer $i$'s value for the item $\alpha$ assigns her. Let $\alpha_1,\alpha_2,\alpha_3$ denote three matchings that match house 1, house 2, and house 3 to agent $i$, respectively. The {\em type space restriction} $\Theta_i = \{\theta_i : \theta_i[\alpha_1] = 2\cdot\theta_i[\alpha_2] = 0.75\cdot\theta_i[\alpha_3]\}$ represents the information that agent $i$ values house 1 twice as much as house 2, and $3/4$ as much as house 3. Here, $\Theta_i$ is the linear space given by $\operatorname{span}(\langle 1, 1/2, 4/3\rangle)$ (this model is further studied in Section~\ref{section:other_forms+subspace} and differs from our main prediction model). 
    \item Fundraising for a common amenity: A multi-story office building that houses several companies is opening a new cafeteria on a to-be-determined floor and is raising construction funds. The allocation space $\Gamma$ is the set of floors of the building and $\theta_i[\alpha]$ is the (inverse of the) cost incurred by building-occupant $i$ for traveling to floor $\alpha$. The set $T_i = \{\theta_i : \lVert\theta_i - \theta_i^*\rVert_p\le k\}$ postulates that $i$'s true type is no more than $k$ away from $\theta_i^*$ in $\ell_{p}$-distance, which might be derived from an estimate of the range of floors agent $i$ works on based on the company agent $i$ represents. Here, $T_i$ is given by a (potentially nonlinear) distance constraint and has no dependence on the other agents' revealed types. 
    \item Bidding for a shared outcome: A delivery service that offers multiple delivery rates (priced proportionally) needs to decide on a delivery route to serve $n$ customers. The allocation space $\Gamma$ is the set of feasible routes and $\theta_i[\alpha]$ is agent $i$'s value for receiving her packages after the driving delay specified by $\alpha$. Let $\alpha_t$ denote an allocation that imposes a driving delay of $t$ on agent $i$. The set $T_i(\btheta_{-i}) = \{\theta_i : \theta_i[\alpha_0]\ge\max_{j\neq i}\theta_j[\alpha_0] + \$50, \theta_i[\alpha_{t+1}]\ge f_t(\theta_i[\alpha_{t}])\;\forall t\}$ is the prediction that agent $i$ is willing to pay \$50 more than any other agent to receive her package as soon as possible, and is at worst a time discounter determined by (potentially nonlinear) discount functions $f_t$. Here, the complexity of $T_i$ is determined by the function $f_t$. 
\end{itemize}


\subsection{The weakest-type VCG mechanism}\label{sec:wcvcg} The classical Vickrey-Clarke-Groves (VCG) mechanism~\citep{Vickrey61:Counterspeculation,Clarke71:Multipart,Groves73:Incentives} elicits agent types $\btheta$, implements the efficient allocation $\alpha^*$ and charges agent $i$ a payment $p_i(\btheta) = w(0,\btheta_{-i}) - \sum_{j\neq i}\theta_j[\alpha^*]$ which is agent $i$'s externality. VCG is generally highly suboptimal when it comes to revenue~\citep{Ausubel06:Lovely,varian2014vcg} (and conversely mechanisms that shoot for high revenue can be highly welfare suboptimal). However, if efficiency is enforced as a constraint of the mechanism design (in addition to IC and IR), then the following \emph{weakest-type VCG mechanism}, first introduced by~\citet{krishna1998efficient} in a Bayesian form, is in fact revenue optimal (Krishna and Perry call it the generalized VCG mechanism). While VCG payments are based on participation externalities, weakest-type VCG payments are based on agents being replaced by their \emph{weakest types} who have the smallest impact on welfare. This approach generates strictly more revenue than vanilla VCG.~\citet{krishna1998efficient} proved that the Bayesian version of weakest-type VCG is revenue optimal among all efficient, Bayes IC, and Bayes IR mechanisms. We next present the weakest-type VCG mechanism, which is a generalization of their mechanism in a prior-free setting.

To describe our weakest-type VCG mechanism, we depart slightly from the language of predictors and focus on what is already known to the mechanism designer as conveyed by the {\em joint type space} of the agents. That is, there is a joint type space $\vec{\Theta}\subseteq\bigtimes_{i=1}^n\R_{\ge 0}^{\Gamma}$ and the mechanism designer knows that the revealed type profile $\btheta = (\theta_1,\ldots,\theta_n)$ belongs to $\bTheta$. The assumption of a joint type space with no further restrictions allows for a rich knowledge structure that can capture arbitrary relationships between agents; given revealed types $\btheta_{-i}$ for all agents excluding $i$, the mechanism designer knows that the revealed type of agent $i$ belongs to the set $\{\hat{\theta}_i : (\hat{\theta}_i,\btheta_{-i})\in \vec{\Theta}\}$ (which is a {\em projection} of the joint type space $\vec{\Theta}$ onto agent $i$'s space, given $\btheta_{-i}$). We define the {\em weakest-type VCG mechanism} in this general setting. 

\begin{figure}
\begin{tcolorbox}[colback=white, parbox=false]
\underline{Weakest-type VCG}\\
Input: joint type space $\bTheta\subseteq\bigtimes_{i=1}^n\R^{\Gamma}_{\ge 0}$.

\begin{itemize}
    \item Agents asked to reveal types $\btheta=(\theta_1,\ldots,\theta_n)$.
    \item The efficient allocation $\alpha^* = \argmax_{\alpha\in\Gamma}\sum_{i=1}^n\theta_i[\alpha]$ is implemented and agent $i$ is charged a payment of \begin{equation}\label{eq:wt}p_i(\btheta) = \min_{\ttheta_i:(\ttheta_i,\btheta_{-i})\in\bTheta}w(\ttheta_i,\btheta_{-i}) - \sum_{j\neq i}\theta_j[\alpha^*].\end{equation}
\end{itemize}
\end{tcolorbox}
\end{figure}
We call $\ttheta_i$ in the minimization~\eqref{eq:wt} agent $i$'s {\em weakest type}.\footnote{The weakest-type VCG mechanism is {\em not} equivalent to a mechanism that actually adds ``fake" competitors into the mechanism environment (as is the case with mechanisms that use ``phantom bidders"~\citep{Sandholm13:Very}). Indeed, consider the mechanism that, for each agent $i$, adds an agent $n+i$ who is the weakest type and then runs vanilla VCG over the augmented set of $2n$ agents. That mechanism is not necessarily efficient and is thus not equivalent to weakest-type VCG. As an example consider a two-item, two-bidder auction where each bidder only wants one item. Say $v_1(X) = v_1(Y) = 10$, $v_2(X) = 5, v_2(Y) = 4$. The efficient allocation gives $Y$ to $1$ and $X$ to $2$ and has welfare $15$. Suppose agent $1$'s type space is $\{v_1 : v_1(X) \ge 7\}$ and agent $2$'s type space is $\R_{\ge 0}^2$ (so type spaces are independent). So agent $1$'s weakest type is $\widetilde{v}_1(X) = 7, \widetilde{v}_1(Y) = 0$, and agent $2$'s weakest type is $\widetilde{v}_2(X) = \widetilde{v}_2(Y) = 0$. Running vanilla VCG among $\{v_1, \widetilde{v}_1, v_2, \widetilde{v}_2\}$ yields an efficient allocation that gives $X$ to $\widetilde{1}$ and $Y$ to $1$. So, while an agent's weakest type will never take that agent's items away from them, they can win some other agent's items. In weakest-type VCG, the weakest types are not real agents and only serve to boost prices---they do not receive any utility from the allocation.} If $(0, \btheta_{-i})\in\bTheta$, $\ttheta_i = 0$, and $p_i(\btheta) = w(0,\btheta_{-i}) - \sum_{j\neq i}\theta_j[\alpha^*]$ is the vanilla VCG payment. On the other extreme if $\{\hat{\theta}_i : (\hat{\theta}_i,\btheta_{-i})\in\bTheta\} = \{\theta_i\}$, that is, the the joint type space exactly conveys agent $i$'s true type, $p_i(\btheta) = \theta_i[\alpha^*]$ and agent $i$'s total value is extracted as payment. \citet{krishna1998efficient} essentially prove a weaker form (they assume independent type spaces, that is, $\bTheta = \bigtimes_i \Theta_i$ has a product structure) of the following result in the Bayesian setting (that is, they only require the weaker constraints of Bayes IC and Bayes IR). We reproduce that here in a prior-free setting with our more general model of type spaces. We present two proofs: one based on the revenue equivalence theorem (stated subsequently) and one based on the result of~\citet{Holmstroem79:Groves} concerning Groves mechanisms.

\begin{theorem}[Revenue equivalence]\label{theorem:revenue_equivalence}
  Let $\bTheta$ be a convex joint type space. Let $p$ be an IC pricing rule implementing the efficient allocation and let $p'$ be any other pricing rule. Then, $p'$ is IC if and only if, for each $i$, there exists $h_i(\btheta_{-i})$ such that $p_i'(\btheta) = p_i(\btheta) + h_i(\btheta_{-i})$.
\end{theorem}

Revenue equivalence is a celebrated result from economics dating back to~\citet{Green77:Characterization,Holmstroem79:Groves,Myerson81:Optimal}. Our presentation is based on~\citet[Theorem 4.3.1]{vohra2011mechanism}.

\begin{theorem}\label{theorem:rev_optimal}
Let $\bTheta$ be a convex joint type space. The weakest-type VCG mechanism is incentive compatible and individually rational. Furthermore, it maximizes revenue among all efficient, incentive compatible, and individually rational mechanisms.
\end{theorem}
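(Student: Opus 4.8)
The plan is to prove the three claims in sequence: incentive compatibility (IC), individual rationality (IR), and revenue optimality, with the last being the substantive part.

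For IC, I would observe that the weakest-type VCG mechanism is a \emph{Groves mechanism}: it implements the efficient allocation $\alpha^*$, and the payment $p_i(\btheta) = h_i(\btheta_{-i}) - \sum_{j\neq i}\theta_j[\alpha^*]$ has the Groves form, where $h_i(\btheta_{-i}) := \min_{\ttheta_i:(\ttheta_i,\btheta_{-i})\in\bTheta} w(\ttheta_i,\btheta_{-i})$ depends only on $\btheta_{-i}$. Since every Groves mechanism is dominant-strategy IC (agent $i$'s utility on reporting $\theta_i'$ equals $\theta_i[\alpha^*(\theta_i',\btheta_{-i})] + \sum_{j\neq i}\theta_j[\alpha^*(\theta_i',\btheta_{-i})] - h_i(\btheta_{-i})$, which is maximized in $\theta_i'$ by reporting truthfully, since truthful reporting maximizes the realized welfare), IC follows immediately; here one must note that the ambient misreporting space is $\R_{\ge 0}^{\Gamma}$ and the Groves argument is insensitive to that. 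For IR, fix the truthful report $\theta_i$ with $(\theta_i,\btheta_{-i})\in\bTheta$. Agent $i$'s utility is $\theta_i[\alpha^*] - p_i(\btheta) = \theta_i[\alpha^*] + \sum_{j\neq i}\theta_j[\alpha^*] - \min_{\ttheta_i}w(\ttheta_i,\btheta_{-i}) = w(\btheta) - \min_{\ttheta_i:(\ttheta_i,\btheta_{-i})\in\bTheta}w(\ttheta_i,\btheta_{-i}) \ge 0$, since $\theta_i$ is itself a feasible choice of $\ttheta_i$ in the minimization. Note convexity of $\bTheta$ is not needed for these two parts.

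For revenue optimality, the cleanest route is the revenue equivalence / payment-identity approach, which is why convexity of $\bTheta$ (hence of each projection $\bTheta_i(\btheta_{-i}) := \{\hat\theta_i : (\hat\theta_i,\btheta_{-i})\in\bTheta\}$, a convex set) is assumed. Fix any efficient, IC, IR mechanism $M$ with payments $p_i^M$. Because $M$ is efficient, it implements $\alpha^*$, so it too is a Groves mechanism (by the Holmström characterization cited in the excerpt, on a convex domain efficiency plus IC forces the Groves payment form), meaning $p_i^M(\btheta) = h_i^M(\btheta_{-i}) - \sum_{j\neq i}\theta_j[\alpha^*]$ for some function $h_i^M$. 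Thus the revenue of $M$ differs from that of weakest-type VCG only through the pivot terms $h_i^M(\btheta_{-i})$ versus $h_i(\btheta_{-i}) = \min_{\ttheta_i \in \bTheta_i(\btheta_{-i})} w(\ttheta_i,\btheta_{-i})$. To maximize revenue I want each $h_i^M(\btheta_{-i})$ as large as possible; the binding constraint is IR. For $M$ to be IR at the boundary type of agent $i$ given $\btheta_{-i}$, I need, for every $\hat\theta_i \in \bTheta_i(\btheta_{-i})$, that $\hat\theta_i[\alpha^*(\hat\theta_i,\btheta_{-i})] + \sum_{j\neq i}\theta_j[\alpha^*(\hat\theta_i,\btheta_{-i})] - h_i^M(\btheta_{-i}) \ge 0$, i.e. $h_i^M(\btheta_{-i}) \le w(\hat\theta_i,\btheta_{-i})$ for all such $\hat\theta_i$; minimizing the right side over the projection gives exactly $h_i^M(\btheta_{-i}) \le h_i(\btheta_{-i})$. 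Hence no efficient IC IR mechanism can set a larger pivot term for any agent at any $\btheta_{-i}$, and weakest-type VCG attains this bound pointwise, so it is revenue-optimal (indeed pointwise revenue-dominant).

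The main obstacle is the invocation of the Groves characterization for the competitor mechanism $M$: I need that on the convex domain $\bTheta$, \emph{any} efficient IC mechanism has payments of the Groves form up to an additive $\btheta_{-i}$-dependent term. This is precisely where convexity (connectedness suffices, but convexity is the clean hypothesis) is essential — on disconnected or pathological domains there can be non-Groves efficient IC mechanisms, and the argument collapses. I would therefore state a lemma citing \citet{Holmstroem79:Groves} adapted to the projected domains $\bTheta_i(\btheta_{-i})$ (each of which is convex, being a slice of a convex set), and verify its hypotheses carefully. The second route the authors promise (via revenue equivalence directly) would instead argue that any two efficient IC mechanisms have payments differing by a $\btheta_{-i}$-only function via the standard line-integral/path argument on the convex type space, then optimize that free function against IR as above; the obstacle there is the same reliance on convexity to run the path argument. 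Either way, the conceptual content beyond IC/IR is: efficient IC $\Rightarrow$ Groves form; IR $\Rightarrow$ pivot term $\le$ weakest-type welfare; weakest-type VCG is tight.
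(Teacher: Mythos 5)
Your proposal is correct and follows essentially the same route as the paper's own argument: the IC/IR parts are identical, and your revenue-optimality argument (efficient $+$ IC on the convex domain $\Rightarrow$ Groves form via \citet{Holmstroem79:Groves}, then IR at every type in the projection $\{\hat{\theta}_i : (\hat{\theta}_i,\btheta_{-i})\in\bTheta\}$ forces $h_i^M(\btheta_{-i})\le\min_{\ttheta_i}w(\ttheta_i,\btheta_{-i})$) is exactly the paper's second proof, with the same caveat about adapting Holmstr\"om's product-structure hypothesis to the projected slices. The revenue-equivalence route you sketch as an alternative is the paper's first proof, so nothing essential is missing.
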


\begin{proof}
    Weakest-type VCG is incentive compatible since it is a Groves mechanism~\citep{Groves73:Incentives}, that is, $\min_{\ttheta_i : (\ttheta_i, \btheta_{-i})\in\bTheta} w(\ttheta_i,\btheta_{-i})$ (called the {\em pivot} term) has no dependence on agent $i$'s revealed type $\theta_i$. Furthermore, agent $i$'s utility is $\sum_{j=1}^n \theta_j[\alpha^*] - \min_{\ttheta_i : (\ttheta_i, \btheta_{-i})\in\bTheta}(\max_{\alpha}\sum_{j\neq i} \theta_j[\alpha] + \ttheta_i[\alpha])\ge \sum_{j=1}^n \theta_j[\alpha^*] - \max_{\alpha} \sum_{j=1}^n \theta_j[\alpha] = 0,$ which proves individual rationality. The proof that weakest-type VCG is revenue optimal follows from the revenue equivalence theorem. Let $p_i(\btheta)$ be the weakest-type VCG payment rule, and let $p_i'(\btheta)$ be any other payment rule that also implements the efficient allocation rule. By revenue equivalence, for each $i$, there exists $h_i(\btheta_{-i})$ such that $p_i'(\theta_i, \btheta_{-i}) = p_i(\theta_i, \btheta_{-i}) + h_i(\btheta_{-i})$. Suppose $\btheta$ is a profile of types such that $p_i'$ generates strictly greater revenue than $p_i$, that is, $\sum_{i=1}^n p_i'(\btheta) > \sum_{i=1}^n p_i(\btheta)$. Equivalently $\sum_{i=1}^n p_i(\theta, \btheta_{-i}) + h_i(\btheta_{-i}) > \sum_{i=1}^n p_i(\theta_i, \btheta_{-i})$. Thus, there exists $i^*$ such that $h_{i^*}(\btheta_{-i^*}) > 0$. Let $$\ttheta_{i^*}=\argmin_{\theta_{i^*}' : (\theta'_{i^*}, \btheta_{-i^*})\in\bTheta}w(\theta'_{i^*}, \btheta_{-i^*})$$ be the \emph{weakest type} with respect to $\btheta_{-i^{*}}$. If weakest-type VCG is run on the type profile $(\ttheta_{i^*}, \btheta_{-i^*})$, the agent with type $\ttheta_{i^*}$ pays their value for the efficient allocation. In other words, the individual rationality constraint is binding for $\ttheta_{i^*}$. Since $h_{i^*}(\btheta_{-i^*}) > 0$, $p_i'$ violates individual rationality, which completes the proof. 
    
    We provide an alternate proof of the revenue optimality of weakest-type VCG via the result of~\citet{Holmstroem79:Groves} that, for convex type spaces, any efficient and IC mechanism is a Groves mechanism. (Technically,~\citet{Holmstroem79:Groves} assumes that agent type spaces are independent, that is $\bTheta$ has a product structure. We can circumvent this issue as follows. Let $\Theta_i(\btheta_{-i}) = \{\hat{\theta}_i : (\hat{\theta}_i, \btheta_{-i})\in\bTheta\}$. We can then give $\bTheta$ product structure by writing $\bTheta = \Theta_1(\btheta_{-1})\times\cdots \times\Theta_n(\btheta_{-n})$. Each $\Theta_i(\btheta_{-i})$ is convex since $\bTheta$ itself is convex.) Then, the revenue-maximizing Groves payment scheme $h_i(\btheta_{-i})$ solves for each agent $i$ $$\max \left\{h_i(\btheta_{-i}): \ttheta_i[\alpha^*] - \left(h_i(\btheta_{-i}) - \sum_{j\neq i}\theta_j[\alpha^*]\right)\ge 0\;\;\;\;\forall\ttheta_i\text{ s.t. } (\ttheta_i,\btheta_{-i})\in \bTheta\right\},$$ which yields the maximum Groves payment subject to IR constraints for all possible types $\ttheta_i$ such that $(\ttheta_i,\btheta_{-i})\in\bTheta$ (and has no dependence on agent $i$'s revealed type $\theta_i$). Rewrite the constraint as $h_i(\btheta_{-i})\le w(\ttheta_i,\btheta_{-i})~\forall~\ttheta_i\text{ s.t. } (\ttheta_i,\btheta_{-i})\in\bTheta$. So, $h_i(\btheta_{-i})\le\min_{\ttheta_i: (\ttheta_i,\btheta_{-i})\in\bTheta}w(\ttheta_i,\btheta_{-i})$ which is precisely weakest-type VCG. 
\end{proof}

Vanilla VCG payments can be interpreted as implementing a first-price/pay-as-bid scheme with a discount for each agent equal to the welfare improvement they create by participating. Weakest-type VCG payments implement a pay-as-bid scheme with a discount equal to the welfare improvement an agent creates over the {\em weakest type} in their type space. The discount can be directly interpreted as an {\em information rent} ({\em e.g.,}~\citet{borgers2015introduction}) incurred by the agents. The more private information an agent has (measured by how far away her IR constraint in weakest-type VCG is from binding---we discuss this aspect formally in the subsequent section) relative to the information conveyed by the type space, the greater her discount. Said another way, the more private information an agent has to distinguish herself from the weakest type, the greater her discount. The weakest type is also the {\em most reluctant} type~\citep{krishna1998efficient} in that it receives zero utility from participation. Finally, weakest types can be viewed as a sophisticated form of reserve pricing (and any reserve pricing structure atop VCG can be expressed as a weakest-type mechanism).

Recent work by~\citet{prasad2025revenue} extends Theorem~\ref{theorem:rev_optimal} to completely general type spaces (for example, disconnected sets representing disjunctive knowledge about agent types). It is shown that the revenue-optimal efficient mechanism belongs to a larger class of ``component-wise" Groves mechanisms; the appropriate payment-maximizing {\em pivot} terms can then be computed via a shortest-paths formulation that arises from the underlying network flow structure to the incentives. (\citet{kinoshita2024socially} derive a similar result for the special case of finite type spaces.)

We conclude our discussion of weakest-type VCG with the observation that the weakest-type VCG mechanism can loosely be interpreted as a prior-free analogue of the seminal Bayesian total-surplus-extraction mechanism of~\citet{cremer1988full} for correlated agents (generalized to infinite type spaces by~\citet{mcafee1992correlated}). This is an interesting connection to explore further in future research.


In the following section we return to our model of predictors, which can now be interpreted as conveying the same kind of knowledge as a joint type space, but without the guarantee that the information conveyed about the agents is actually correct. That guarantee---which equivalently stipulates that agent $i$'s {\em misreporting space} (that determines her IC and IR constraints) is restricted to the set $\{\hat{\theta}_i : (\hat{\theta}_i, \btheta_{-i})\in \bTheta\}$---was required in Theorem~\ref{theorem:rev_optimal} to ensure individual rationality of weakest-type VCG. Our setting of mechanism design with predictors can thus be interpreted as removing this constraint on the misreporting space: predictors $\{T_i\}$ have no apriori guarantees on their veracity and we must use them in a way that respects IC and IR constraints for a misreporting space equal to the entire ambient type space. 


\section{Measuring predictor quality via the welfare of a weakest type}\label{sec:quality}

Vanilla VCG payment can be expressed as $p_i^{\VCG} = \theta_i[\alpha^*] - (w(\theta_i,\btheta_{-i}) - w(0, \btheta_{-i}))$, and admits a natural interpretation as a pay-as-bid term ($\theta_i[\alpha^*]$) minus a discount equal to agent $i$'s information rent ($w(\theta_i,\btheta_{-i}) - w(0, \btheta_{-i})$). More generally, weakest type payment $p_i^{\WT} = \theta_i[\alpha^*] - (w(\theta_i,\btheta_{-i}) - \min_{\ttheta_i\in\Theta_i(\btheta_{-i})}w(\ttheta_i, \btheta_{-i}))$ admits the same interpretation. Information rent measures precisely the value of agent $i$'s private information over what the mechanism designer already knows prior to type revelation---that agent $i$ creates welfare at least $\min_{\ttheta_i\in\Theta_i(\btheta_{-i})} w(\ttheta_i,\btheta_{-i})$ for the system.

Our measure of predictor error is the negation of the predictor's purported information rent: let $$\err_i = \err_i(\btheta_{-i}) = \min_{\ttheta_i\in T_i(\btheta_{-i})} w(\ttheta_i,\btheta_{-i})-  w(\theta_i, \btheta_{-i}).$$ If $\err_i \le 0$, the predictor underestimates the welfare agent $i$ contributes to the system; if $\err_i > 0$ it overestimates welfare. By measuring a predictor's quality by its purported information rent, we provide a unified way of analyzing an extremely general class of expressive predictions through their information rents. The purported information rent of the general statement ``agent $i$'s type satisfies property $P(\btheta_{-i})$" is precisely $-\err_i = \min_{\ttheta_i \text{ satisfies } P(\btheta_{-i})} w(\ttheta_i,\btheta_{-i}) - w(\theta_i,\btheta_{-i})$. So, no matter how complex property $P$ might be, information rent is concerned with a single number: the predicted welfare created by the predicted weakest type.

\subsection{Properties of predictors} The following result is immediate from the definition of $\err_i$ and Theorem~\ref{theorem:rev_optimal}.

\begin{theorem}\label{theorem:revenue_characterization}
    Let $T_1,\ldots, T_n$ be predictors such that $\err_i\le 0$ for all agents $i$. The mechanism that implements the efficient allocation and prices given by $p_i = \min_{\ttheta_i\in T_i(\btheta_{-i})}w(\ttheta_i,\btheta_{-i}) - \sum_{j\neq i}\theta_j[\alpha^*]$ is IC, IR, welfare consistent, and extracts payment $p_i = \theta_i[\alpha^*] - |\err_i|$ from each agent---and thus generates revenue equal to $\OPT - \sum_i |\err_i|.$
\end{theorem}

We describe an equivalent geometric characterization of predictors that generate a certain payment in the framework above. Let $L_w(\btheta_{-i}) = \left\{\ttheta_i : w(\ttheta_i,\btheta_{-i})=w\right\}$ be a {\em welfare level set} and let $L_{\ge k}(\btheta_{-i}) = \left\{\ttheta_i : w(\ttheta_i,\btheta_{-i})\ge w\right\}$. The following propositions follow immediately from the definitions. 

\begin{proposition}
    $L_{\ge w}(\btheta_{-i})$ is the union of axis-parallel halfspaces.
\end{proposition}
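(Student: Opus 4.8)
The plan is to unfold the definition of the welfare function and observe that the $\max$ over allocations turns a superlevel-set membership into an existential over a single axis-parallel constraint. Fix the revealed profile $\btheta_{-i}$, and for each allocation $\alpha\in\Gamma$ write $c_\alpha := \sum_{j\neq i}\theta_j[\alpha]$, which is a constant once $\btheta_{-i}$ is fixed. Then for any candidate type $\ttheta_i$,
\[
w(\ttheta_i,\btheta_{-i}) \;=\; \max_{\alpha\in\Gamma}\Bigl(\ttheta_i[\alpha] + c_\alpha\Bigr).
\]
First I would note that $w(\ttheta_i,\btheta_{-i})\ge w$ holds if and only if there exists some $\alpha\in\Gamma$ with $\ttheta_i[\alpha] + c_\alpha \ge w$, i.e.\ with $\ttheta_i[\alpha]\ge w - c_\alpha$. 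This is the only real step, and it is immediate from the fact that a maximum of a finite collection of numbers is at least $w$ iff one of them is.

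Next I would identify each of these conditions as an axis-parallel halfspace: the set $\{\ttheta_i : \ttheta_i[\alpha]\ge w - c_\alpha\}$ constrains only the single coordinate indexed by $\alpha$, so it is a halfspace whose bounding hyperplane is orthogonal to the $\alpha$-th coordinate axis. Taking the union over the finitely many allocations $\alpha\in\Gamma$ then gives
\[
L_{\ge w}(\btheta_{-i}) \;=\; \bigcup_{\alpha\in\Gamma}\bigl\{\ttheta_i : \ttheta_i[\alpha]\ge w - c_\alpha\bigr\},
\]
which is exactly a (finite) union of axis-parallel halfspaces, as claimed. If one insists on working inside the nonnegative orthant $\R_{\ge0}^\Gamma$ rather than the ambient $\R^\Gamma$, the same description holds with each halfspace replaced by its intersection with the orthant; I would just remark that this does not affect the structural conclusion, since the orthant is itself an intersection of axis-parallel halfspaces and the relevant thresholds $w-c_\alpha$ can be taken to be nonnegative without loss when $w\ge w(0,\btheta_{-i})$.

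There is essentially no obstacle here — the statement is a direct consequence of the definition of $w$ as a pointwise maximum of functions that are affine in $\ttheta_i$ and, crucially, depend on $\ttheta_i$ through a \emph{single} coordinate each. The only thing to be careful about is bookkeeping: making explicit that $\btheta_{-i}$ is held fixed so that the $c_\alpha$ are constants, and that $\Gamma$ is finite so the union is finite. I would keep the write-up to the two displayed equations above plus a sentence identifying single-coordinate constraints with axis-parallel halfspaces.
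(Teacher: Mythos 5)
Your proposal is correct and follows essentially the same route as the paper: unfold $w(\ttheta_i,\btheta_{-i})=\max_{\alpha\in\Gamma}(\ttheta_i[\alpha]+\sum_{j\neq i}\theta_j[\alpha])$, note that the max is at least $w$ iff some single-coordinate constraint $\ttheta_i[\alpha]\ge w-\sum_{j\neq i}\theta_j[\alpha]$ holds, and take the union over $\alpha\in\Gamma$. The paper's proof is exactly this one-line rewriting of the superlevel set as $\bigcup_{\alpha\in\Gamma}\{\ttheta_i:\ttheta_i[\alpha]\ge w-\sum_{j\neq i}\theta_j[\alpha]\}$, so no further comparison is needed.
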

\begin{proof} We have
    $$L_{\ge w}(\btheta_{-i}) = \left\{\ttheta_i : \bigvee_{\alpha\in\Gamma}\ttheta_i[\alpha]\ge w - \sum_{j\neq i}\theta_j[\alpha]\right\} = \bigcup_{\alpha\in\Gamma}\left\{\ttheta_i : \ttheta_i[\alpha]\ge w - \sum_{j\neq i}\theta_j[\alpha]\right\}.$$ 
\end{proof}

\begin{proposition}
    A predictor $T_i$ has error $\err_i$ if and only if it intersects the level set $L_{w(\theta_i,\btheta_{-i})+\err_i}$ and does not intersect any $L_w$ with $w < w(\theta_i,\btheta_{-i})$. Predictor $T_i$ has $\err_i\le 0$ if and only if $L_{w(\theta_i,\btheta_{-i})}(\btheta_{-i})\subseteq L_{w(\ttheta_i,\btheta_{-i})}(\btheta_{-i})$ where $\ttheta_i$ is the weakest type in $T_i(\btheta_{-i})$.
\end{proposition}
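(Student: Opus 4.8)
The plan is to unwind the definition of $\Delta_i^{\err}$; as the paper notes, both claims are essentially definitional, so the work is bookkeeping plus one genuine subtlety (attainment of the minimum). Fix $\btheta_{-i}$ and write $W := w(\theta_i,\btheta_{-i})$ and $m := \inf_{\ttheta_i\in T_i(\btheta_{-i})} w(\ttheta_i,\btheta_{-i})$, so that $\Delta_i^{\err}(\btheta_{-i}) = W - m$ by definition. First I would record the dictionary between ``intersecting a level set'' and welfare values: by definition of $L_w(\btheta_{-i})$, the predictor $T_i(\btheta_{-i})$ intersects $L_w(\btheta_{-i})$ iff some $\ttheta_i\in T_i(\btheta_{-i})$ satisfies $w(\ttheta_i,\btheta_{-i}) = w$, and $T_i(\btheta_{-i})$ avoids every $L_{w'}(\btheta_{-i})$ with $w' < c$ iff $w(\ttheta_i,\btheta_{-i})\ge c$ for all $\ttheta_i\in T_i(\btheta_{-i})$, equivalently $T_i(\btheta_{-i})\subseteq L_{\ge c}(\btheta_{-i})$.

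For the first claim I would prove both directions from this dictionary. ($\Leftarrow$) If $T_i(\btheta_{-i})$ meets $L_{W-\Delta_i}$ then $m\le W-\Delta_i$ and this value is attained; if in addition $T_i(\btheta_{-i})$ misses every $L_w$ with $w<W-\Delta_i$, then $T_i(\btheta_{-i})\subseteq L_{\ge W-\Delta_i}$, so $m\ge W-\Delta_i$; hence $m = W-\Delta_i$ and $\Delta_i^{\err} = W - m = \Delta_i$. ($\Rightarrow$) Conversely $\Delta_i^{\err} = \Delta_i$ forces $m = W-\Delta_i$, and then every element of $T_i(\btheta_{-i})$ has welfare $\ge W-\Delta_i$ (so $T_i$ misses all strictly lower level sets) while, provided the infimum is attained, some element has welfare exactly $W-\Delta_i$ (so $T_i$ meets $L_{W-\Delta_i}$). (The threshold in the second condition should be $w(\theta_i,\btheta_{-i}) - \Delta_i$; it coincides with $w(\theta_i,\btheta_{-i})$ precisely when $\Delta_i = 0$.)

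For the second claim I would specialize: taking $\ttheta_i$ to be the weakest type, i.e. a minimizer, so $w(\ttheta_i,\btheta_{-i}) = m$, we get $\Delta_i^{\err}\ge 0 \iff m\le W \iff w(\ttheta_i,\btheta_{-i})\le w(\theta_i,\btheta_{-i})$. It remains to identify this inequality with the containment $L_{\ge w(\theta_i,\btheta_{-i})}(\btheta_{-i})\subseteq L_{\ge w(\ttheta_i,\btheta_{-i})}(\btheta_{-i})$ — I would read the $L$'s in the proposition as the super-level sets $L_{\ge}$ of the preceding proposition, since with equality level sets the containment degenerates to $w(\theta_i,\btheta_{-i}) = w(\ttheta_i,\btheta_{-i})$. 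One direction is just monotonicity of $\{\ttheta_i : w(\ttheta_i,\btheta_{-i})\ge t\}$ in $t$; for the converse I would use that $\ttheta_i\mapsto w(\ttheta_i,\btheta_{-i}) = \max_{\alpha}(\ttheta_i[\alpha] + \sum_{j\neq i}\theta_j[\alpha])$ is continuous with range $[\,\max_\alpha\sum_{j\neq i}\theta_j[\alpha],\,\infty)$, and that $\max_\alpha\sum_{j\neq i}\theta_j[\alpha]\le w(\theta_i,\btheta_{-i})$ because $\theta_i\ge 0$, so if $w(\ttheta_i,\btheta_{-i}) > w(\theta_i,\btheta_{-i})$ there is a point of welfare exactly $w(\theta_i,\btheta_{-i})$ witnessing that the containment fails.

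The one real obstacle is attainment of the infimum $m$: if $T_i(\btheta_{-i})$ is not closed, $m$ need not be a minimum and ``$T_i$ intersects $L_{W-\Delta_i}$'' can fail even when $\Delta_i^{\err} = \Delta_i$. I would resolve this either by adopting the standing assumption that predictors return closed sets — then $w(\cdot,\btheta_{-i})$, being continuous with compact sublevel sets on $\R_{\ge 0}^\Gamma$ (each coordinate of $\ttheta_i$ is bounded on any sublevel set since $w(\ttheta_i,\btheta_{-i})\ge\ttheta_i[\alpha]+\sum_{j\neq i}\theta_j[\alpha]$), attains its minimum on $T_i(\btheta_{-i})$ — or by restating the characterization with $L_{\ge}$ throughout, which removes the attainment issue entirely. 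All remaining steps are immediate from the definitions.
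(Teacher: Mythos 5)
Your proposal is correct and takes essentially the same route as the paper, which offers no explicit argument beyond asserting that the proposition "follows immediately from the definitions" --- your careful unwinding of $\Delta_i^{\err} = w(\theta_i,\btheta_{-i}) - \min_{\ttheta_i\in T_i(\btheta_{-i})}w(\ttheta_i,\btheta_{-i})$ against the level-set dictionary is exactly that intended argument. Your three side remarks are also well taken: the non-intersection threshold in the first claim should indeed read $w < w(\theta_i,\btheta_{-i})-\Delta_i$, the containment in the second claim only makes sense for the superlevel sets $L_{\ge w}$ (with equality level sets it degenerates), and attainment of the minimum (e.g., via closed predictions, or by restating with $L_{\ge}$) is a hypothesis the paper's use of $\min$ silently assumes; note also that for the converse of the second claim, $\theta_i$ itself already witnesses the failure of containment, so the continuity/range argument is not needed.
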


\begin{remark}
An important consequence is that if $T_i(\btheta_{-i})$ intersects $L_{w(0,\btheta_{-i})}$, $T_i$ is a ``useless" predictor in that the information it provides is not even strong enough to say that agent $i$ generates any extra welfare to the system. More precisely, $\min_{\ttheta_i\in T_i(\btheta_{-i})}w(\ttheta_i,\btheta_{-i}) = w(0,\btheta_{-i})$, so the predictor's purported information rent is equal to that of VCG. In the appendix, we show that a general class of predictors that only make {\em qualitative} predictions about the efficient allocation suffer from this issue.~\citet{christodoulou2024mechanism}, in work subsequent to our initial conference publication, use such predictions to enhance polynomial-time welfare-maximization algorithms. So, while their model of predictions is indeed compatible and subsumed by ours, the predictions they study are too weak to generate meaningful revenue in our setting (Proposition~\ref{prop:qualitative} in Appendix A). Interestingly, we are able to obtain meaningful revenue gains from predictors that do not even contain the true type, as illustrated by the following example.
\end{remark}

\begin{example}
    Consider an allocation space $\Gamma = \{\alpha_1,\alpha_2\}$ with two outcomes as depicted in Figure~\ref{fig:predictions}. Suppose $\btheta_{-i}$ is such that $\sum_{j\neq i}\theta_j[\alpha_1] = 1$ and $\sum_{j\neq i}\theta_j[\alpha_2] = 3$, so $w(0, \btheta_{-i}) = 3$. Welfare level sets are depicted by the bold black lines. $L_{w(0,\btheta_{-i})}$ is the set of types $\ttheta_i$ that do not create any additional welfare for the system which is precisely the set $\{\ttheta_i : \ttheta[\alpha_1]\le 2\wedge\ttheta[\alpha_2] = 0\}$. The sets enclosed by the dotted lines represent the outputs of two different predictors. Each set's weakest type generates a welfare of $w(\ttheta_i,\btheta_{-i}) = 4$ as each prediction intersects $L_4$ and no lower level set. The true type of agent $i$, $\theta_i$, generates welfare $w(\theta_i,\btheta_{-i}) = 5$. Therefore, both predictors have the same error of $\err_i = -1$ and their weakest types generate the same welfare of $4$. This is in spite of the fact that only one of the predictions actually contains the true type (the polygon) while the other (the ellipse) is quite far away from the true type.
\end{example}

\begin{figure}[t]
    \centering
    \includegraphics[width=0.6\linewidth, trim={37.5cm 24.4cm 0cm, 0cm},clip  ]{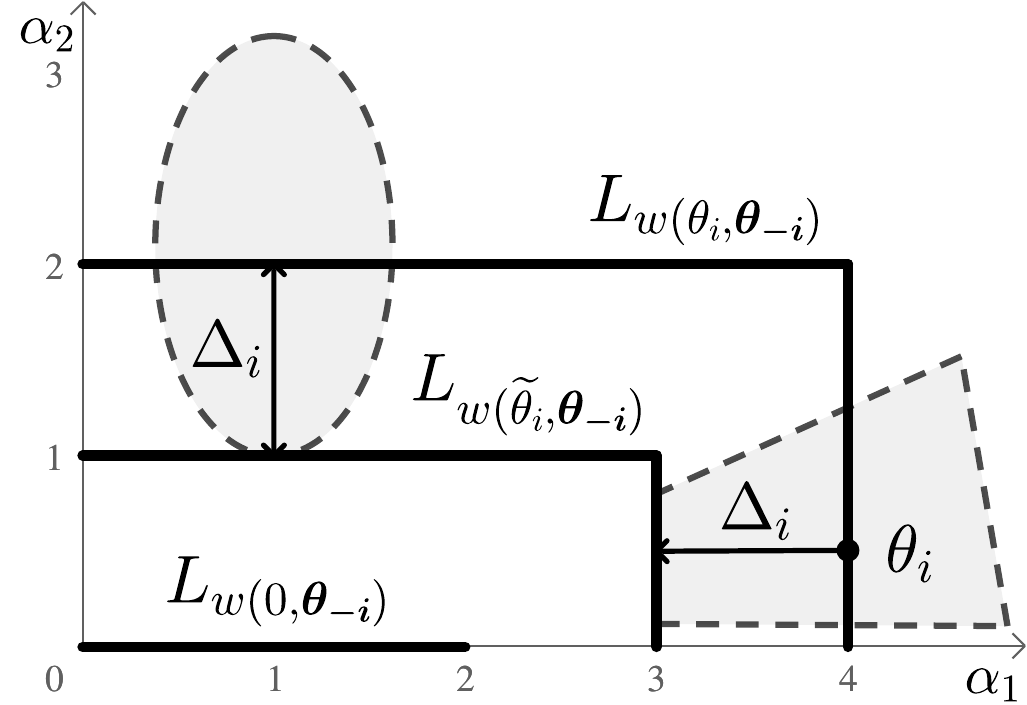}
    \caption{Two different predictions (the ellipse and polygon displayed with dashed boundaries) that are equivalent in the sense that their weakest types create the same amount of welfare $w(\ttheta_i,\btheta_{-i}) = w(\theta_i,\btheta_{-i}) - \Delta_i$ for the system and thus generate the same weakest-type payments for agent $i$, despite the fact that one prediction (the polygon) contains the true type and the other (the ellipse) completely misses the true type. Welfare level sets are depicted by the solid black lines.}
    \label{fig:predictions}
\end{figure}

\begin{remark}
An important property of this framework highlighted by the above example is that predictors need not know anything about the precise structure of types. As long as they provide a reasonable (under)estimate of agent $i$'s competitive level given $\btheta_{-i}$ they can be used in the weakest-type mechanism to fruitfully boost revenue over vanilla VCG. Consider the setting of a combinatorial auction with $m$ items and $n$ bidders wherein a bidder's ambient valuation space is $\R_{\ge 0}^{2^m}$ since she can express a value for any bundle of items. In practice, the auction designer might set a cap on the number of distinct bundles any single bidder can submit bids for (as was done in the spectrum auctions held in the UK and Canada~\citep{ausubel2017practical}). In this case, a predictor need not reckon with the question of what bundles the bidder will bid on (which is ostensibly a very difficult prediction problem that would require unrealistic insight into a bidder's bidding strategy). It only needs to provide a good estimate of her competitive level as measured by the welfare she creates for the system, which we posit is a much more reasonable and practically plausible prediction task. Indeed, in high-stakes auctions, the auction designer might reasonably expect a certain bidder (who, say, represents a large conglomerate as is the case in sourcing and spectrum auctions) to win at least some items. That already provides the auction designer sufficient knowledge to implement a weakest-type pricing scheme that extracts greater revenues than VCG. To summarize, a predictor like the ellipse in Figure~\ref{fig:predictions} that is completely wrong about the bundles bidder $i$ bids on is still perfectly good in our framework, and therefore predictors have substantial flexibility in the information they convey.\end{remark}

Another way in which predictors have leeway is the fact that only the prediction error $\err_i(\btheta_{-i})$ given the revealed types of the other agents affects payments. Predictor $T_i$ could be wildly inaccurate on a different set of types, that is, $|\err_i(\btheta_{-i}')|$ might be large, but as long as $\err_i(\btheta_{-i})$ is small, it is a useful predictor in our framework.
\subsection{Other notions of predictor error}
Other geometric notions of error are possible, for example, ones that measure the distance between an agent's true type and the predictor's weakest type. We next show that any $\ell_p$-distance based error is subsumed by our difference-of-welfare error measure.
\begin{proposition}\label{prop:other_errors} Let $\ttheta_i = \argmin_{\theta_i'\in T_i(\btheta_{-i})}w(\theta_i',\btheta_{-i})$. Then, $$|w(\theta_i,\btheta_{-i}) - w(\ttheta_i,\btheta_{-i})|\le \norm{\theta_i - \ttheta_i}_p$$ for all $p\in [1,\infty]$.
\end{proposition}
\begin{proof} Let $\alpha$ and $\widetilde{\alpha}$ denote the efficient allocations on $(\theta_i,\btheta_{-i})$ and $(\ttheta_i,\btheta_{-i})$, respectively.
    We have $w(\theta_i,\btheta_{-i}) - w(\ttheta_i,\btheta_{-i}) = (\theta_i[\alpha] + \sum_{j\neq i}\theta_j[\alpha])-(\ttheta_i[\widetilde{\alpha}] + \sum_{j\neq i}\theta_j[\widetilde{\alpha}]) \le (\theta_i[\widetilde{\alpha}] + \sum_{j\neq i}\theta_j[\widetilde{\alpha}])-(\ttheta_i[\widetilde{\alpha}] + \sum_{j\neq i}\theta_j[\widetilde{\alpha}]) = \theta_i[\widetilde{\alpha}] - \ttheta_i[\widetilde{\alpha}]\le \norm{\theta_i-\ttheta_i}_{\infty}.$ Similarly, $ w(\ttheta_i,\btheta_{-i})-w(\theta_i,\btheta_{-i})\le \ttheta_i[\alpha] - \theta_i[\alpha]\le \norm{\theta_i-\ttheta_i}_{\infty}$. The $\ell_{\infty}$ norm is upper bounded by the $\ell_{p}$ norm for all $p$, so we are done. 
\end{proof}
Proposition~\ref{prop:other_errors} implies that any predictor that has low error under a $\ell_p$-based metric also has low error under our information-rent based metric. This provides further credence to our use of weakest type welfares/information rent as a unifying way to measure the quality of a prediction.

\subsection{Computational considerations}\label{sec:computation}
Before we present our main mechanism and its guarantees, we briefly discuss the computational complexity of computing weakest type payments. We consider the special case where the sets $T_i(\btheta_{-i})$ output by the predictors are polytopes. Let $\textsf{size}(T_i)$ denote the max encoding size over all $\btheta_{-i}$ required to write down the constraints defining $T_i(\btheta_{-i})$.

\begin{theorem}\label{theorem:lp} Let $T_i(\btheta_{-i})$ be a polytope for all $\btheta_{-i}$. Weakest type and corresponding welfare $\min_{\ttheta_i\in T_i(\btheta_{-i})}w(\ttheta_i,\btheta_{-i})$, and thus payment $p_i = \min_{\ttheta_i\in T_i(\btheta_{-i})}w(\ttheta_i,\btheta_{-i})-\sum_{j\neq i}\theta_j[\alpha^*]$, can be computed in $poly(|\Gamma|, \textsf{size}(T_i), n)$ time.
\end{theorem}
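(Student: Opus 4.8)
The plan is to reformulate the weakest-type computation as a single linear program with $|\Gamma|+1$ variables. The key observation is that, for fixed $\btheta_{-i}$, the map $\ttheta_i\mapsto w(\ttheta_i,\btheta_{-i})=\max_{\alpha\in\Gamma}\bigl(\ttheta_i[\alpha]+\sum_{j\neq i}\theta_j[\alpha]\bigr)$ is a pointwise maximum of $|\Gamma|$ affine functions of $\ttheta_i$, hence convex and piecewise linear. Minimizing such a function over the polytope $T_i(\btheta_{-i})$ is therefore a convex program that admits the standard epigraph reformulation: introducing a fresh scalar $t$ and the constants $c_\alpha:=\sum_{j\neq i}\theta_j[\alpha]$, I would solve
\[
\min_{\ttheta_i,\,t}\ t\quad\text{s.t.}\quad t\ge \ttheta_i[\alpha]+c_\alpha\ \ \forall\alpha\in\Gamma,\qquad \ttheta_i\in T_i(\btheta_{-i}).
\]
The constraint $\ttheta_i\in T_i(\btheta_{-i})$ expands into the linear inequalities describing the polytope (whose description has size at most $\textsf{size}(T_i)$ and in particular includes $\ttheta_i\ge 0$, since $T_i(\btheta_{-i})\subseteq\R_{\ge 0}^{\Gamma}$). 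At any optimum the constraint $t\ge\max_{\alpha}(\ttheta_i[\alpha]+c_\alpha)$ is tight, so the optimal value $t^*$ equals $\min_{\ttheta_i\in T_i(\btheta_{-i})}w(\ttheta_i,\btheta_{-i})$ and the optimal $\ttheta_i$ is a weakest type. The minimum is attained because $T_i(\btheta_{-i})$ is a nonempty polytope and the objective is bounded below by $w(0,\btheta_{-i})\ge 0$ (if $T_i(\btheta_{-i})=\emptyset$ the predictor is degenerate and is rejected).

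Next I would bound the running time. Computing all the $c_\alpha$ takes one pass over the $|\Gamma|$ allocations, i.e.\ $O(n|\Gamma|)$ time, and each $c_\alpha$ is a sum of at most $n$ of the input values $\theta_j[\alpha]$, so its bit-length is polynomial in the input size. The linear program above thus has $|\Gamma|+1$ variables, $|\Gamma|+O(\textsf{size}(T_i))$ constraints, and total encoding length $\mathrm{poly}(|\Gamma|,\textsf{size}(T_i),n)$. Since a linear program is solvable in time polynomial in its encoding length (e.g.\ by the ellipsoid method or an interior-point method), both $t^*$ and the optimal $\ttheta_i$ are obtained in $\mathrm{poly}(|\Gamma|,\textsf{size}(T_i),n)$ time. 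Finally, to assemble the payment $p_i$ of Equation~\eqref{eq:wcvcg-payment} I also need $\alpha^*=\argmax_{\alpha}\sum_{j=1}^n\theta_j[\alpha]$, the quantity $w(\theta_i,\btheta_{-i})=\sum_{j=1}^n\theta_j[\alpha^*]$, and $\theta_i[\alpha^*]$, all computable in $O(n|\Gamma|)$ time; substituting these together with $t^*$ gives $p_i$ within the claimed bound.

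There is no substantive obstacle here; the proof is essentially the observation that ``minimize a convex piecewise-linear function over a polytope'' is a linear program. The only points requiring (minor) care are: (i) verifying that the epigraph reformulation is faithful, i.e.\ that tightness of the $t$-constraints at optimality forces $t^*$ to equal the minimized welfare rather than merely upper-bound it; and (ii) checking that the size of the resulting LP — including the bit-complexity of the coefficients $c_\alpha$, which depend on $\btheta_{-i}$ and on $n$ — is genuinely polynomial in $|\Gamma|$, $\textsf{size}(T_i)$, and $n$, so that a polynomial-time LP solver yields the stated runtime.
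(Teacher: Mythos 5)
Your proposal is correct and is essentially the paper's own argument: the paper likewise rewrites the min-max weakest-type problem as a linear program with an auxiliary scalar variable ($\gamma$ in place of your $t$), with $|\Gamma|+1$ variables and $|\Gamma|+\textsf{size}(T_i)$ constraints, noting the $O(n|\Gamma|)$ cost of forming the constants $\sum_{j\neq i}\theta_j[\alpha]$. Your additional remarks on bit-complexity and on assembling $p_i$ are fine but not a different route.
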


\begin{proof}
    Weakest type computation is a min-max optimization problem: $$\min_{\ttheta_i\in T_i(\btheta_{-i})}w(\ttheta_i,\btheta_{-i}) = \min_{\ttheta_i\in T_i(\btheta_{-i})}\max_{\alpha\in\Gamma}\left(\ttheta_i[\alpha] + \sum_{j\neq i}\theta_j[\alpha]\right).$$ We can rewrite the min-max problem as a pure minimization problem by enumerating the set of allocations $\Gamma$ and introducing an auxiliary scalar variable $\gamma$ to replace the inner maximization.
    The weakest type in $T_i(\btheta_{-i})$ is therefore the solution $\ttheta_i\in\R^{\Gamma}$ to the linear program
\begin{equation}\label{eq:LP}\min
\left\{
  \gamma \;:\;
  \begin{aligned}
  & \ttheta_i[\alpha] + \textstyle\sum_{j\neq i}\theta_j[\alpha]\le\gamma \;\;\forall\alpha\in\Gamma,\\
  & \ttheta_i\in T_i(\btheta_{-i}), \gamma\ge 0
  \end{aligned}
\right\}
\end{equation} with $|\Gamma|+1$ variables and $|\Gamma| + \textsf{size}(T_i)$ constraints. Generating the first set of constraints requires the value of $\sum_{j\neq i}\theta_j[\alpha]$ for each $\alpha\in\Gamma$, which takes time at most $n|\Gamma|$ to compute. 
\end{proof}

More generally, the complexity of the above mathematical program is determined by the complexity of constraints needed to define $T_i(\btheta_{-i})$: for example, if $T_i(\btheta_{-i})$ is a convex set then they are convex programs. Naturally, a major caveat of Theorem~\ref{theorem:lp} is that $|\Gamma|$ can be very large (for example, $|\Gamma|$ is exponential in combinatorial auctions). However, this issue can be circumvented as long as we have access to a practically-efficient routine for finding welfare-maximizing allocations, that is, for computing $w(\btheta)$. For small allocation spaces that might amount to an exhaustive search over $\Gamma$. In large allocation spaces like in combinatorial auctions, that might involve integer programming or practically-efficient custom search techniques~\citep{Sandholm05:CABOB, Rothkopf98:Computationally}.

\begin{theorem}\label{theorem:separation_oracle}
    Linear program~\eqref{eq:LP} can be solved with polynomially many calls to $w(\cdot)$ and additional $poly(\textsf{size}(T_i))$ time.
\end{theorem}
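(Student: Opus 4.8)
The plan is to replace linear program~\eqref{eq:LP}---which nominally has $|\Gamma|+1$ variables and $|\Gamma|+\textsf{size}(T_i)$ constraints---by an equivalent linear program of size $poly(\textsf{size}(T_i))$ built using a single call to $w(\cdot)$, and then to solve that small program by any polynomial-time LP algorithm. Write $c_\alpha := \sum_{j\neq i}\theta_j[\alpha]$, so the optimal value of~\eqref{eq:LP} is $v^\star := \min_{\ttheta_i\in T_i(\btheta_{-i})}\max_{\alpha\in\Gamma}(\ttheta_i[\alpha]+c_\alpha)$. The one fact that makes everything work is the elementary bound $v^\star\geq\max_{\alpha\in\Gamma}c_\alpha = w(0,\btheta_{-i})$, valid because $\ttheta_i\geq 0$ for every $\ttheta_i\in T_i(\btheta_{-i})\subseteq\R_{\geq 0}^\Gamma$.

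First I would read off the set $S\subseteq\Gamma$ of allocations whose coordinate $\ttheta_i[\alpha]$ appears with a nonzero coefficient in the description of $T_i(\btheta_{-i})$; this takes $O(\textsf{size}(T_i))$ time and $|S|\leq\textsf{size}(T_i)$ since merely naming each such coordinate costs at least one symbol. Coordinates outside $S$ are unconstrained by $T_i(\btheta_{-i})$ apart from the ambient nonnegativity, so I would argue we may impose $\ttheta_i[\alpha]=0$ for all $\alpha\notin S$ in~\eqref{eq:LP} without changing $v^\star$: zeroing out those coordinates of any feasible $\ttheta_i$ keeps it in $T_i(\btheta_{-i})$ (the defining constraints involve only $S$) and only lowers terms in the maximum. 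Under this substitution the welfare constraints $\ttheta_i[\alpha]+c_\alpha\leq\gamma$ for $\alpha\notin S$ collapse to the single requirement $\gamma\geq\max_{\alpha\notin S}c_\alpha$; rather than evaluate that maximum over exponentially many allocations, I would replace it by the stronger constraint $\gamma\geq w(0,\btheta_{-i})$, which is legitimate because $v^\star\geq w(0,\btheta_{-i})\geq\max_{\alpha\notin S}c_\alpha$ and which costs exactly one call to $w(\cdot)$.

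The result is the reduced program
$$\min\bigl\{\,\gamma \;:\; \ttheta_i'\in\widehat{T_i},\ \ \ttheta_i'[\alpha]+c_\alpha\leq\gamma\ \ \forall\alpha\in S,\ \ \gamma\geq w(0,\btheta_{-i})\,\bigr\},$$
where $\widehat{T_i}\subseteq\R_{\geq 0}^{S}$ is $T_i(\btheta_{-i})$ with its description reinterpreted over the coordinates in $S$. It has $|S|+1=O(\textsf{size}(T_i))$ variables and $O(\textsf{size}(T_i))$ constraints, and all of its data is assembled from the description of $T_i$, the numbers $c_\alpha$ for $\alpha\in S$ (each computable in $O(n)$ time), and the single value $w(0,\btheta_{-i})$. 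I would then verify the equivalence of its optimum with $v^\star$ by a two-line check in each direction: the restriction to $S$ of an optimizer of~\eqref{eq:LP}, paired with $\gamma=v^\star$, is feasible for the reduced program; and conversely, the zero-extension of any feasible $\ttheta_i'$ lies in $T_i(\btheta_{-i})$ and has welfare $w(\cdot,\btheta_{-i})\leq\max\{\gamma,\,w(0,\btheta_{-i})\}=\gamma$. Solving the reduced LP with a polynomial-time method yields $v^\star$ and an optimal $\ttheta_i'$ whose zero-extension is a weakest type; the weakest-type VCG payment~\eqref{eq:wcvcg-payment} then follows from $v^\star$ together with $w(\btheta)$ and the efficient allocation, i.e.\ one more oracle call. (Infeasibility of~\eqref{eq:LP}, i.e.\ $T_i(\btheta_{-i})=\emptyset$, corresponds exactly to infeasibility of the reduced program and is detected by the LP solver.)

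The only genuine obstacle is handling the exponentially many welfare constraints for $\alpha\notin S$ without enumerating them, and the two-step fix above---zero those coordinates for free, then dominate the surviving constraint by $\gamma\geq w(0,\btheta_{-i})$ via $v^\star\geq w(0,\btheta_{-i})$---is essentially the entire proof; the rest (the precise reading of ``$T_i(\btheta_{-i})$ is a polytope'' when most coordinates are unconstrained, and treating $\ttheta_i\geq 0$ as implicit in, rather than counted by, $\textsf{size}(T_i)$) is routine bookkeeping. As an alternative that fits the ``polynomially many calls'' phrasing more literally, one can instead run the ellipsoid method on the reduced convex program $\min_{\ttheta_i'\in\widehat{T_i}}w(\ttheta_i',\btheta_{-i})$ over $\R^{S}$ (with $\ttheta_i'$ zero-extended inside $w$), spending one call to $w(\cdot)$ per subgradient of the convex piecewise-linear objective and using the explicit constraints of $\widehat{T_i}$ for separation; this terminates in $poly(\textsf{size}(T_i))$ iterations.
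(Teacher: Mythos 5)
Your proposal is correct, but it takes a genuinely different route from the paper. The paper keeps the welfare constraints implicit and runs the ellipsoid method on the reduced variable set: the separation oracle checks $\hat{\theta}_i\in T_i(\btheta_{-i})$ directly and uses one call to $w(\cdot)$ per iteration to test the constraints $\ttheta_i[\alpha]+\sum_{j\neq i}\theta_j[\alpha]\le\gamma$, returning the welfare-maximizing allocation as a violated constraint when the test fails---hence ``polynomially many calls.'' You instead eliminate the exponentially many constraints outright: after fixing $\ttheta_i[\alpha]=0$ for coordinates not named in the description of $T_i(\btheta_{-i})$ (the same variable-reduction step the paper uses), you observe that nonnegativity forces the optimum to satisfy $v^\star\ge w(0,\btheta_{-i})\ge\max_{\alpha\notin S}\sum_{j\neq i}\theta_j[\alpha]$, so the surviving constraints for $\alpha\notin S$ can be dominated by the single constraint $\gamma\ge w(0,\btheta_{-i})$, obtained with one oracle call; the result is an explicit LP of size $poly(\textsf{size}(T_i))$ solvable by any polynomial-time LP algorithm, and your two-direction equivalence check is sound. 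Your argument buys something stronger than the stated theorem (a constant number of calls to $w(\cdot)$ and no ellipsoid machinery), while the paper's oracle-based route is more generic, e.g.\ it extends unchanged if $T_i(\btheta_{-i})$ is a convex set accessed only through a separation oracle rather than an explicit constraint list. The only loose ends in your write-up are bookkeeping you already flag (computing $\sum_{j\neq i}\theta_j[\alpha]$ for $\alpha\in S$ costs $O(n)$ per coordinate, a dependence the paper's proof incurs as well, and nonnegativity of the ambient type space must be treated as implicit), and your closing ellipsoid variant is essentially the paper's own proof.
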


\begin{proof}
    Let $\widetilde{\Gamma}\subseteq\Gamma$ denote the set of allocations that appear in the description of $T_i(\btheta_{-i})$ ({\em e.g.}, as a list of linear constraints on $\theta_i$). Then, for any $\alpha\notin\widetilde{\Gamma}$, the weakest type $\ttheta_i$ that minimizes $w(\ttheta_i,\btheta_{-i})$ over $\ttheta_i\in T_i(\btheta_{-i})$ satisfies $\ttheta_i[\alpha] = 0$ as $\ttheta_i[\alpha]$ is unconstrained. Therefore, the number of variables in the linear program~\eqref{eq:LP} is not $|\Gamma|$ but can be bounded by $|\widetilde{\Gamma}|\le\textsf{size}(T_i)$. 
    A separation oracle for the linear program, given as input to the Ellipsoid algorithm~\citep{Grotschel93:Geometric}, can be implemented as follows: a candidate point $(\hat{\gamma},\hat{\theta}_i)$ is feasible if and only if $w(\hat{\theta}_i,\btheta_{-i})\ge \hat{\gamma}$ and $\hat{\theta}_i\in T_i(\btheta_{-i})$; else the efficient allocation achieving welfare $w(\hat{\theta}_i,\btheta_{-i})$ represents a violated constraint. 
\end{proof}

\citet{prasad2026weakest} have developed practical constraint generation routines for solving linear program~\eqref{eq:LP} for combinatorial auctions, where the welfare maximization separation oracle is solved with integer programming.

\section{Main mechanism and its guarantees}\label{sec:main_mechanism}

We now present our main mechanism and analyze the total welfare and revenue it generates. Let $\Delta_i^{\VCG}(\btheta_{-i}) \coloneq \min_{\ttheta_i\in T_i(\btheta_{-i})}w(\ttheta_i,\btheta_{-i}) - w(0,\btheta_{-i})$ be the minimum predicted welfare lift over the zero type. Our mechanism $\cM_{L,\gamma}$ is parameterized by two tunable parameters, a {\em rate} $L_i > 1$ and a {\em stopping distance} $\gamma_i$, per agent. (One notational technicality is that the parameters $\gamma_i$ are chosen during the execution of the mechanism and not before. So, the notation $\cM_{L,\gamma}$ is a slight abuse of notation to maintain readability.) 

\begin{figure}[t]
\begin{tcolorbox}[colback=white, parbox=false]
\underline{Mechanism $\cM_{L,\gamma}$}\\
Input: predictors $T_i:\Theta_{-i}\to\cP(\Theta_i)$ for each agent $i$.

\begin{itemize}
    \item Agents asked to reveal types $\theta_1,\ldots,\theta_n$.
    \item Let $\alpha^* = \argmax_{\alpha\in\Gamma}\sum_{i=1}^n\theta_i[\alpha]$ be the efficient allocation. 
    \item For each agent $i$: 
    \begin{itemize}
        \item If $\Delta_i^{\VCG}(\btheta_{-i}) = 0$: set price $p_i = p_i^{\VCG}$.
        \item Else: set rate $L_i > 1$, stopping distance $\gamma_i\in \left(0, \Delta_i^{\VCG}(\btheta_{-i})\right]$, and price $$p_i = \min_{\ttheta_i\in T_i(\btheta_{-i})}w(\ttheta_i,\btheta_{-i}) -\frac{\Delta_i^{\VCG}(\btheta_{-i})}{L_i^{k_i}} - \sum_{j\neq i}\theta_j[\alpha^*],$$ where $k_i$ is chosen uniformly at random from the set $$\left\{0,1,\ldots,\left\lceil\log_{L_i}\left(\frac{\Delta_i^{\VCG}(\btheta_{-i})}{\gamma_i}\right)\right\rceil\right\}.$$
    \end{itemize} 
    \item Let $\cI = \left\{i : \theta_i[\alpha^*] - p_i\ge 0\right\}.$ If agent $i\notin\cI$, $i$ is excluded and receives zero utility (zero value and zero payment).\footnote{One practical consideration is that this step might require a more nuanced implementation of an ``outside option'' for agents to be indifferent between participating and being excluded versus not participating at all. (We do not pursue this highly application-specific issue in this work.) In auction and matching settings this step is standard: the agent simply receives no items.} 
    If agent $i\in\cI$, $i$ enjoys allocation $\alpha^*$ and pays $p_i$.
\end{itemize}
\end{tcolorbox}
\end{figure}

$\cM_{L,\gamma}$ receives prediction $T_i(\btheta_{-i})\subseteq\Theta_i$ for each agent $i$. If a prediction is too weak to definitively say that agent $i$ adds any welfare to the system (that is, $\Delta_i^{\VCG}=0$), it is not used. For example, if predictors output the entire ambient type space ($T_i(\btheta_{-i}) = \R_{\ge 0}^{\Gamma}$), or if $T_i(\btheta_{-i})$ is of the form described by Proposition~\ref{prop:qualitative}, $\cM_{L,\gamma}$ implements VCG payments. Otherwise, the mechanism performs an exponential search for the welfare crated by agent $i$'s true type, starting at $w(0, \btheta_{-i})$ ($k_i=0$) and stopping at a distance of $\le\gamma_i$ from the minimum predicted welfare $\min_{\ttheta_i\in T_i(\btheta_{-i})}w(\ttheta_i,\btheta_{-i})$. Parameter $L_i$ controls the rate of that search. Critically, the discretization itself of the exponential search depends on the true types of all other agents $\btheta_{-i}$. Finally, if the final welfare level chosen by $\cM_{L,\gamma}$ for agent $i$ (which has no dependence on agent $i$'s revealed type) ends up being too high, the mechanism has a final step that explicitly excludes that agent enforcing that they receive zero utility in order to prevent an IR violation due to charging that agent more than their value.

Before proceeding to the analysis of $\cM_{L,\gamma}$, we record that it is incentive compatible, individually rational, and welfare consistent.

\begin{theorem}
    $\cM_{L,\gamma}$ is IC, IR, and welfare consistent.
\end{theorem}

\begin{proof}
    $\cM_{L,\gamma}$ is IC for the same reason weakest-type VCG is IC (Theorem~\ref{theorem:rev_optimal})---it is a randomization over IC mechanisms. It is IR by definition: all agents with potential IR violations (those not in $\cI$) do not participate and receive zero utility. $\cM_{L,\gamma}$ is welfare consistent since $\theta_i\in T_i(\btheta_{-i})\Rightarrow\min_{\ttheta_i\in T_i(\btheta_{-i})}w(\ttheta_i,\btheta_{-i})\le w(\theta_i,\btheta_{-i})\Rightarrow\text{all prices }\cM_{L,\gamma}\text{ randomizes over are}\le\theta_i[\alpha^*].$
    
\end{proof}

Finally, let $\cM_{L, 0}$ denote---with a slight abuse of notation as it represents the limiting behavior of $\cM_{L,\gamma}$ as $\gamma_i\downarrow 0$---the deterministic mechanism that sets $p_i = \min_{\ttheta_i\in T_i(\btheta_{-i})}w(\ttheta_i,\btheta_{-i}) - \sum_{j\neq i}\theta_j[\alpha^*]$. So, $\cM_{L, 0}$ uses the predicted sets output by each $T_i$ without modification. $\cM_{L,0}$ is essentially weakest-type VCG with a final step (that sacrifices welfare) to ensure that no agent's IR constraint is violated. When predictors have zero error, that is, $\err_i = 0$, $\cM_{L,0}$ achieves welfare and revenue equal to $\OPT$. Of course this approach is extremely brittle: if $\err_i > 0$ (that is, predictor $T_i$ overestimates agent $i$'s welfare level) the value and payment extracted from agent $i$ both drop to zero (though observe that if $\err_i < 0$ welfare is unaffected and payment decreases linearly). We discuss the sensitivity and precise parameter dependence of $\cM_{L,\gamma}$ shortly---first we pin down its precise welfare and revenue guarantees.

\subsection{Guarantees}\label{sec:main_guarantees}

We now state, prove, and discuss our main welfare and revenue guarantees on $M_{L,\gamma}$. 
We abbreviate $\err_i(\btheta_{-i})$ and $\Delta_i^{\VCG}(\btheta_{-i})$ as $\err_i$ and $\Delta_i^{\VCG}$, respectively, for readability. The quantity $\err_i^+ = \max\{0, \err_i\}$ denotes the positive part of $\err_i$.

\begin{theorem}[Agent-wise welfare guarantee]\label{theorem:agentwise_welfare_bound}
    The expected value enjoyed by agent $i$ under $M_{L,\gamma}$ is equal to $$\frac{1+\min\left\{K_i, k_i^*\right\}}{1 + K_i}\cdot\theta_i[\alpha^*],$$ where $$K_i = \left\lceil\log_{L_i}\left(\frac{\Delta_i^{\VCG}}{\gamma_i}\right)\right\rceil\text{ and } k_i^* = \left\lfloor\log_{L_i}\left(\frac{\Delta_i^{\VCG}}{\err_i^+}\right)\right\rfloor.$$
\end{theorem}
\begin{proof}
    Let $K_i = \left\lceil\log_{L_i}\left(\frac{\Delta_i^{\VCG}}{\gamma_i}\right)\right\rceil$ and let $k_i^*$ be the smallest integer such that $$w(\theta_i,\btheta_{-i})\ge \min_{\ttheta_i\in T_i(\btheta_{-i})}w(\ttheta_i,\btheta_{-i}) - \frac{\Delta_i^{\VCG}}{L_i^{k}},$$ so $k_i^* = \left\lfloor\log_{L_i}\left(\frac{\Delta_i^{\VCG}}{\err_i^+}\right)\right\rfloor.$ (If $w(\theta_i,\btheta_{-i})\ge \min_{\ttheta_i\in T_i(\btheta_{-i})} w(\ttheta_i,\btheta_{-i})$, that is, predictor $T_i$ {\em underestimates} the welfare created by $i$, $\err_i^+ = 0$ and $k_i^* = +\infty$ is unbounded.) Agent $i$ participates in $\cM_{L,\gamma}$ if and only if the randomly selected $k_i$ satisfies $k_i\le k_i^*$. So, agent $i$'s expected value in the mechanism is (for brevity let $\ttheta_i$ denote the weakest type) $$\begin{aligned}\E\left[\theta_i[\alpha^*]\cdot\mathbf{1}\left[i\in\cI\right]\right] &= \E\left[\theta_i[\alpha^*]\cdot\mathbf{1}\left[w(\theta_i,\btheta_{-i})\ge w(\ttheta_i,\btheta_{-i}) - \frac{\Delta_i^{\VCG}}{L_i^{k_i}}\right]\right] \\ &= \E\left[\theta_i[\alpha^*]\cdot\mathbf{1}\left[k_i\le k_i^*\right]\right] \\ &= \theta_i[\alpha^*]\cdot\Pr\left(k_i\le k_i^*\right).\end{aligned}$$ and $$\Pr\left(k_i\le k_i^*\right) = \min\left\{1,\frac{1+k_i^*}{1+K_i}\right\},$$ which yields the desired bound. 
\end{proof}
\begin{corollary}[Welfare guarantee]\label{cor:welfare_bound}
    The expected welfare generated by $\cM_{L,\gamma}$ is equal to
    $$\sum_{i=1}^n\frac{1+\min\{K_i,k_i^*\}}{1+K_i}\cdot\theta_i[\alpha^*]\ge\min_i\left\{\frac{1+\min\{k_i^*, K_i\}}{1+K_i}\right\}\OPT.$$
\end{corollary}

\begin{theorem}[Agent-wise payment guarantee]\label{theorem:agentwise_payment_bound}
    The expected payment made by agent $i$ in $\cM_{L,\gamma}$ is at least
    $$\E[p_i]\ge \frac{1+\min\{K_i, k_i^*\}}{1+K_i}\left(\theta_i[\alpha^*] + \err_i\right) - \frac{\Delta_i^{\VCG}L_i}{(L_i-1)(K_i+1)},$$ where $K_i$ and $k_i^*$ are defined in Theorem~\ref{theorem:agentwise_welfare_bound}.
\end{theorem}
\begin{proof}
    Let $k_i^*$ be defined as in Theorems~\ref{theorem:agentwise_welfare_bound} and let $\ttheta_i$ be the weakest type in $T_i(\btheta_{-i})$. We have
    $$\begin{aligned}
        \E[p_i] & = \sum_{k = 0}^{\min\{K_i,k_i^*\}}\E[p_i | k_i = k]\cdot\Pr(k_i = k) \\ & = \frac{1}{1+K_i}\sum_{k = 0}^{\min\{K_i,k_i^*\}}\left(\theta_i[\alpha^*] - \left(w(\theta_i,\btheta_{-i}) - \left(w(\ttheta_i,\btheta_{-i}) - \frac{\Delta_i^{\VCG}}{L_i^{k}}\right)\right)\right) \\ 
        &= \frac{1}{1+K_i}\sum_{k=0}^{\min\{K_i, k_i^*\}}\left(\theta_i[\alpha^*] + \err_i - \frac{\Delta_i^{\VCG}}{L_i^k}\right) \\ 
        &= \frac{1+\min\{K_i, k_i^*\}}{1+K_i}\left(\theta_i[\alpha^*] + \err_i\right) - \frac{\Delta_i^{\VCG}}{1+K_i}\sum_{k=0}^{\min\{K_i, k_i^*\}} L_i^{-k},
    \end{aligned}$$ where in the first line we have rewritten the price formula as pay-as-bid with discount.
    We have $$\sum_{k=0}^{\min\{K_i, k_i^*\}} L_i^{-k} < \sum_{k=0}^{\infty} L_i^{-k} = \frac{L_i}{L_i-1}.$$ Substituting yields the desired payment bound. 
\end{proof}
\begin{corollary}[Revenue guarantee]\label{cor:revenue_bound}
The expected revenue generated by $\cM_{L,\gamma}$ is at least $$\min_i\left\{\frac{1+\min\{K_i, k_i^*\}}{1+K_i}\right\}\left(\OPT + \sum_{i=1}^n\err_i\right) - \sum_{i=1}^n\frac{\Delta_i^{\VCG}L_i}{(L_i-1)(K_i+1)}.$$
\end{corollary}

In the above bounds, the term $$\frac{1+\min\{K_i, k_i^*\}}{1+K_i} = \min\left\{1,\frac{1+k_i^*}{1+K_i}\right\}=\min\left\{1, \frac{1+\lfloor\log_{L_i}(\Delta_i^{\VCG}/\err_i^+)\rfloor}{1 + \lceil\log_{L_i}(\Delta_i^{\VCG}/\gamma_i)\rceil}\right\}$$ is the probability that the modified weakest-type welfare is less than or equal to the true welfare created by agent $i$, that is, the probability that $i\in\cI$. Ignoring rounding, this probability is equal to $1$ roughly when $\err_i^+\le \gamma_i$. In that case, the payment/revenue bounds suffer from an additive loss that has no dependence on $\err_i$. This can be interpreted a penalty for how quickly the discretization covers the welfare interval $[w(0,\btheta_{-i}), w(\ttheta_i,\btheta_{-i})]$ (which enables more ``smoothness" in payment degradation) controlled by $L_i$ and $\gamma_i$.

We illustrate how welfare and revenue degrade (restricted to a single agent) as the error of the predictors worsen. We plot agent value (Figure~\ref{fig:value}) and payment (Figure~\ref{fig:payment_plots}) as a function of $\err_i$ for different $(L_i,\gamma_i)$ settings. The key takeaway is that if predictor $T_i$ underestimates welfare, that is $\min_{\ttheta_i\in T_i(\btheta_{-i})}w(\ttheta_i,\btheta_{-i})\le w(\theta_i,\btheta_{-i})$ and $\err_i \le 0$, agent value remains optimal (Figure~\ref{fig:value}) and payment degrades {\em linearly} in $\err_i$ (Figure~\ref{fig:payment_plots}). If predictor $T_i$ overestimates welfare, that is, $\min_{\ttheta_i\in T_i(\btheta_{-i})}w(\ttheta_i,\btheta_{-i})> w(\theta_i,\btheta_{-i})$ and $\err_i > 0$, value (Figure~\ref{fig:value}) and payment (Figure~\ref{fig:payment_plots}) degrade at a rate determined by $L_i,\gamma_i$. Holding $L_i$ constant and increasing $\gamma_i$ leads to a more gradual decay. Smaller values of $\gamma_i$ yield greater payment extracted when $\err_i\le 0$, but lead to more drastic payment degradation for $\err_i > 0$. A similar effect is illustrated while holding $\gamma_i$ constant and increasing the search rate $L_i$. Both parameters represent trade-offs between performance in the best case and error tolerance, and is one that the mechanism designer must choose carefully based on his confidence in the prediction. Data-driven algorithm design~\citep{balcan2020data} provides a toolkit for parameter tuning with provable guarantees. While we do not provide specific guidance on parameter selection in this work, an important future research question is to characterize the optimal tuning based on other knowledge about predictor quality (for example, the mechanism designer might have a prior over prediction error).

\begin{figure}[t]
    \centering
    \includegraphics[width=0.75\linewidth, trim={0cm 0cm 0cm, 0cm},clip  ]{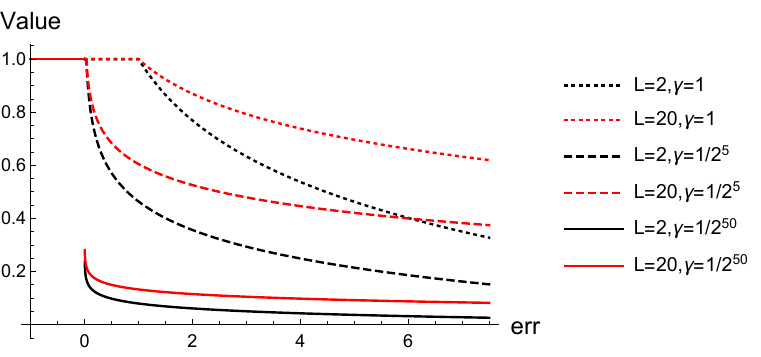}
    \caption{An agent's expected value (as a fraction of $\theta_i[\alpha^*]$) as a function of $\err_i$ for problem parameter $\Delta_i^{\VCG} = 10$, varying $\gamma_i\in\{1, 2^{-5}, 2^{-50}\}$ and $L_i\in\{2, 20\}$.}
    \label{fig:value}
\end{figure}

\begin{figure}[t]
    \centering
    \includegraphics[width=0.8\linewidth, trim={0cm 0cm 0cm, 0cm},clip  ]{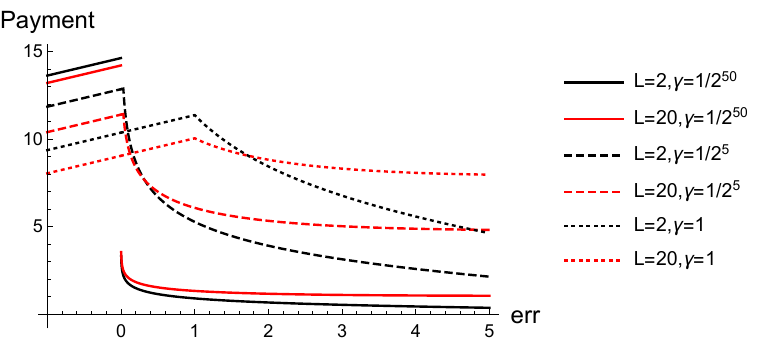}
    \caption{Payment as a function of $\err_i$ for problem parameters $\theta_i[\alpha^*] = 15$, $\Delta_i^{\VCG} = 10$, varying $\gamma_i\in\{1, 2^{-5}, 2^{-50}\}$ and $L_i\in\{2,20\}$.}
    \label{fig:payment_plots}
\end{figure}

\subsection{Consistency, robustness, and barriers to our approach}\label{sec:consistency_and_robustness}
We situate our results within the consistency-robustness framework studied by the algorithms-with-predictions literature. Furthermore, we retrospectively discuss the failures of some alternate approaches---and how $\cM_{L,\gamma}$ addresses those failures---that are solely concerned with consistency and robustness measures. Finally, we establish impossibility results that show that our mechanism's performance cannot be substantially improved.

We say a mechanism is $(a, b)$-{\em consistent} and $(c, d)$-{\em robust} if when predictions are perfect (which, in our setting, means $\Delta_i^{\err} = 0 \iff \min_{\ttheta_i\in T_i(\btheta_{-i})}w(\ttheta_i,\btheta_{-i}) = w(\theta_i,\btheta_{-i})$) it satisfies $\E[\text{welfare}]\ge a\cdot\OPT$, $\E[\text{revenue}]\ge b\cdot\OPT$, and satisfies $\E[\text{welfare}]\ge c\cdot\OPT,\E[\text{revenue}]\ge d\cdot\textsf{VCG}$ independent of the prediction quality (where $\textsf{VCG}$ denotes the revenue of the vanilla VCG mechanism). Consistency demands near-optimal performance when the side information is perfect, and therefore we compete with the total social surplus $\OPT$ on both the welfare and revenue fronts. Robustness deals with the case of arbitrarily bad side information, in which case we would like our mechanism's performance to be competitive with vanilla VCG, which already obtains welfare equal to $\OPT$. High consistency and robustness ratios are in fact trivial to achieve, and we will thus largely not be too concerned with these measures---our main goal is to design high-performance mechanisms that degrade gracefully as the prediction errors increase. 

\paragraph{Failures of other approaches}
We discuss a trivial approach that obtains high consistency and robustness ratios, but suffers from huge discontinuous drops in performance even when predictions are nearly perfect, further illustrating the need for a well-tuned instantiation of $\cM_{L,\gamma}$.

    {\em Trust predictions completely.} One trivial way of using predictions is to trust them completely, that is, run weakest-type VCG with payments given by $p_i = \min_{\ttheta_i\in T_i(\btheta_{-i})}w(\ttheta_i,\btheta_{-i})-\sum_{j\neq i}\theta_j[\alpha^*]$ and exclude any agent who is charged too much (this mechanism is given by $\cM_{L,0}$). This approach generates welfare $\sum_{i : \err_i\le 0}\theta_i[\alpha^*]$ and revenue $\sum_{i:\err_i\le  0}\theta_i[\alpha^*] - |\err_i|$ (Theorem~\ref{theorem:revenue_characterization}). If predictions are perfect, that is, $\err_i = 0$ for all $i$, both welfare and revenue are equal to $\OPT$. However, if all predictions are such that $\err_i > 0$, both welfare and revenue drop to $0$. So this mechanism is $(1,1)$-consistent and $(0,0)$-robust.
    
    {\em Discard predictions randomly.} The issue with the above mechanism is that if all predictions are invalid, it generates no welfare and no revenue. We show how randomization can quell that issue. One trivial solution is to discard all predictions with probability $\beta$, and trust all predictions completely with probability $(1-\beta)$. That is, with probability $\beta$ charge each agent her vanilla VCG price and with probability $1-\beta$ charge each agent her weakest-type price (and exclude any agent who is overcharged). This mechanism achieves strong consistency and robustness ratios. Indeed, its expected welfare is $\beta\cdot\OPT + (1-\beta)\cdot\sum_{i:\err_i\le 0}\theta_i[\alpha^*]$ and its expected revenue is $\beta\cdot\VCG + (1-\beta)\cdot (\sum_{i:\err_i\le 0}\theta_i[\alpha^*] - |\err_i|)$. So, it is $(1, 1-\beta)$-consistent and $(\beta,\beta)$-robust.
    This approach suffers from a major issue: its revenue drops drastically the moment predictions overestimate welfare, that is, $\err_i > 0$. In particular, if predictions overestimate an agent's welfare level, {\em but barely so}, this approach completely misses out on any payments from such agents and drops to the revenue of VCG (which can be drastically smaller than $\OPT$). But, a tiny relaxation of these predictions would have sufficed to increase revenue significantly and perform competitively with $\OPT$. One simple approach is to subtract a relaxation parameter $\eta_i$ from the welfare of the weakest type consistent with each prediction with some probability, and discard the prediction with complementary probability. If $\err_i - \eta_i \le 0$ for all $i$, then such a mechanism would perform well. The main issue with such an approach is that the moment $\err_i - \eta_i > 0$, our relaxation by $\eta_i$ is still a welfare overestimate and the performance drastically drops. Our main mechanism $\cM_{L,\gamma}$ essentially selects the $\eta_i$ randomly from a suitable discretization of the possible welfare levels to ``smooth out" this behavior and extract payments from each agent with reasonable probability.

\paragraph{Consistency and robustness of $\cM_{L,\gamma}$}

We determine the consistency and robustness properties of our mechanism. While the usual notion of consistency asks for a multiplicative comparison against $\OPT$, our payment guarantee will differ additively from $\OPT$. This is a stronger style of consistency result.

\begin{theorem}[Consistency]\label{theorem:consistency}
    Suppose $\min_{\ttheta_i\in T_i(\btheta_{-i})} w(\ttheta_i,\btheta_{-i}) = w(\theta_i,\btheta_{-i})$. Then, $\cM_{L,\gamma}$ is efficient (that is, it's welfare is equal to $\OPT$), and agent $i$'s expected payment is $$\E[p_i]\ge\theta_i[\alpha^*] - \frac{L_i(w(\theta_i,\btheta_{-i}) - w(0, \btheta_{-i}))}{(L_i-1)\left(1 + \left\lceil\log_{L_i}\left(\frac{w(\theta_i,\btheta_{-i}) - w(0, \btheta_{-i})}{\gamma_i}\right)\right\rceil\right)}.$$
\end{theorem}
\begin{proof}
    Immediate from Theorems~\ref{theorem:agentwise_welfare_bound} and~\ref{theorem:agentwise_payment_bound}. 
\end{proof}
So, when predictors are perfect, $\cM_{L,\gamma}$ is effectively able to reduce the discount given by vanilla VCG by a term $\log_{L_i}((w(\theta_i,\btheta_{-i}) - w(0,\btheta_{-i}))/\gamma_i)$ that increases as the stopping distance $\gamma_i$ decreases. For $(L_i,\gamma_i)$ pairs such that $L_i\ge 2$ and $$\gamma_i\le\frac{w(\theta_i,\btheta_{-i}) - w(0,\btheta_{-i})}{L_i^{2(w(\theta_i,\btheta_{-i}) - w(0,\btheta_{-i}))/\varepsilon}},$$ the bound reads $\E[p_i]\ge\theta_i[\alpha^*] - \varepsilon$. As $\gamma_i$ tends to zero, the limiting behavior $\cM_{L,\gamma}$ is that of the mechanism that trusts predictors completely, that is, it deterministically uses prices $p_i = \min_{\ttheta_i\in T_i(\btheta_{-i})} w(\ttheta_i,\btheta_{-i})$, which is the only revenue consistent mechanism.

\begin{theorem}[Robustness of a predictor that underestimates welfare]\label{theorem:robustness_underestimate} Suppose $\min_{\ttheta_i\in T_i(\btheta_{-i})} w(\ttheta_i,\btheta_{-i})$ $\le w(\theta_i,\btheta_{-i})$. Then, $\cM_{L,\gamma}$ generates welfare equal to $\OPT$, and agent $i$'s expected payment is
    $$\E[p_i]\ge \max\left\{p_i^{\VCG},\theta_i[\alpha^*] - |\err_i| - \frac{L_i\Delta_i^{\VCG}}{(L_i-1)\left(1+ \left\lceil\log_{L_i}\left(\Delta_i^{\VCG}/\gamma_i\right)\right\rceil\right)}\right\}.$$
\end{theorem}
\begin{proof}
    If $\min_{\ttheta_i\in T_i(\btheta_{-i})}w(\ttheta_i,\btheta_{-i})\le w(\theta_i,\btheta_{-i})$, every modified welfare that $\cM_{L,\gamma}$ randomizes over is also less than $w(\theta_i,\btheta_{-i})$. So, $\cM_{L,\gamma}$ generates the optimal welfare independent of the value of $\err_i \le 0$. (So $\cM_{L,\gamma}$ actually satisfies a stronger condition than welfare consistency.) The payment bound is immediate from Theorem~\ref{theorem:agentwise_payment_bound}. 
\end{proof}
\begin{theorem}[Robustness of a predictor that overestimates welfare]\label{theorem:robustness_overestimate} Suppose $\min_{\ttheta_i\in T_i(\btheta_{-i})}w(\ttheta_i,\btheta_{-i})$ $> w(\theta_i,\btheta_{-i})$. Then, the expected value generated from agent $i$ is at least $$\frac{1}{1+\left\lceil\log_{L_i}\left(\Delta_i^{\VCG}/\gamma_i\right)\right\rceil}\theta_i[\alpha^*],$$ and agent $i$'s expected payment is 
    $$\E[p_i]\ge\frac{1}{1+\left\lceil\log_{L_i}\left(\Delta_i^{\VCG}/\gamma_i\right)\right\rceil}\max\left\{p_i^{\VCG}, \theta_i[\alpha^*] - \err_i^+(L_i - 1)\right\}.$$
\end{theorem}
\begin{proof}
    We have $$\E[p_i]\ge \E[p_i\mid k_i = k_i^*]\Pr(k_i=k_i^*) = \frac{1}{1+K_i}\left(\theta_i[\alpha^*] - \left( w(\theta_i,\btheta_{-i}) - \left(w(\ttheta_i,\btheta_{-i}) - \frac{\Delta_i^{\VCG}}{L_i^{k_i^*}}\right)\right)\right).$$ Since $k_i^*$ is minimal such that $w(\theta_i,\btheta_{-i})\ge w(\ttheta_i,\btheta_{-i}) - \frac{\Delta_i^{\VCG}}{L_i^{k_i^*}}$, we have $$\begin{aligned}w(\theta_i,\btheta_{-i}) - \left(w(\ttheta_i,\btheta_{-i}) - \frac{\Delta_i^{\VCG}}{L_i^{k_i^*}}\right) &\le \left(w(\ttheta_i,\btheta_{-i}) - \frac{\Delta_i^{\VCG}}{L_i^{k_i^*+1}}\right) - \left(w(\ttheta_i,\btheta_{-i}) - \frac{\Delta_i^{\VCG}}{L_i^{k_i^*}}\right) \\ &= \Delta_i^{\VCG}\left(L_i^{-k_i^*} - L_i^{-(k_i^*+1)}\right) \\ &= \Delta_i^{\VCG}(L_i - 1)L_i^{-(k_i^*+1)} \\ &= \Delta_i^{\VCG}(L_i-1)\left(L_i^{1+\lfloor\log_{L_i}(\Delta_i^{\VCG}/\err_i^+)\rfloor}\right)^{-1} \\ &\le \Delta_i^{\VCG}(L_i-1)\left(L_i\cdot L_i^{\log_{L_i}(\Delta_i^{\VCG}/\err_i^+)-1}\right)^{-1} \\ &= \Delta_i^{\VCG}(L_i-1)\left(\Delta_i^{\VCG}/\err_i^+\right)^{-1} \\ &= (L_i-1)\err_i^+. \end{aligned}$$ So, $\E[p_i]\ge \frac{1}{1+K_i}\left(\theta_i[\alpha^*] - \err_i^+(L_i-1)\right)$. Finally, there is always a $\frac{1}{1+K_i}$ probability that $k_i = 0$ in which case $p_i = p_i^{\VCG}$. 
\end{proof}

\begin{remark} A mechanism that achieves perfect revenue consistency, that is, $\E[p_i] = \theta_i[\alpha^*]$ whenever $T_i(\btheta_{-i}) = \theta_i$ cannot be error tolerant. Indeed, the only way of attaining perfect revenue consistency is to trust the minimum predicted welfare with no randomness or modification. That mechanism obtains zero payment if a predictor overestimates welfare by any amount.\end{remark}

Consistency and robustness do not capture the important effect of how aggressive or how conservative a prediction is on the performance decay of our mechanism, as explained in Section~\ref{sec:main_guarantees} and the plots in Figures~\ref{fig:value} and~\ref{fig:payment_plots}. The mechanism's performance can furthermore be greatly improved through high quality hyperparameter selection, which can, for example, be learned from data~\citep{khodak2022learning,balcan2020data}.

\paragraph{Impossibility results} There are natural barriers to how much revenue can be obtained in multidimensional mechanism design settings. For multi-unit auctions,~\citet{sandholm2015automated} construct a distribution over additive valuations such that for any mechanism, its expected revenue is at most $(1 - \frac{1}{h/\ell})\frac{1}{\ln(h/\ell)}$ times its expected social welfare, where $h$ and $\ell$ are upper and lower bounds on any bidders' valuation for any item (and are known to the mechanism designer). Furthermore, they show that for any mechanism that is completely prior free (so it does not even have knowledge of bounds on valuations like $h$ and $\ell$) and any $\varepsilon>0$, there exist distributions over valuations such that the mechanism's expected revenue is at most $\varepsilon$ times its expected welfare. This result is obtained by studying a single-item setting. Both of these barriers carry over to our setting as well, since the setting studied by~\citet{sandholm2015automated} is a special case of ours.
We now present a couple results along a similar vein in our prediction-augmented setting. First, we show how revenue optimality of weakest type (Theorem~\ref{theorem:rev_optimal}) implies an upper bound on the revenue of any welfare consistent mechanism. Then, we present a construction akin to that of~\citet{sandholm2015automated}. Like~\citet{sandholm2015automated}, our construction relies on a ``hard" distribution over single-parameter agents. It replaces the dependence on bounds $h$ and $\ell$---which we do not assume the mechanism designer has knowledge of---with qualities of the predictors input to the mechanism. Below, we call a predictor $T_i : \bTheta_{-i}\to\cP(\Theta_i)$ {\em constant} if $T_i$ is a constant function (and has no dependence on its input $\btheta_{-i}$).
\begin{theorem}
    Any welfare consistent mechanism satisfies $\E[p_i(\btheta)]\le \min_{\ttheta_i\in T_i(\btheta_{-i})} w(\ttheta_i,\btheta_{-i}) - \sum_{j\neq i}\theta_j[\alpha^*]$, where the expectation is over any randomness in the mechanism.
\end{theorem}
\begin{proof}
    Welfare consistency means that for any $\btheta_{-i}$, IC constraints, IR constraints, and efficiency must hold over $T_i(\btheta_{-i})$. By Theorem~\ref{theorem:rev_optimal}, weakest-type payments $\min_{\ttheta_i\in T_i(\btheta_{-i})} w(\ttheta_i,\btheta_{-i}) - \sum_{j\neq i}\theta_j[\alpha^*]$ maximize agent $i$'s payment. 
\end{proof}
\begin{theorem}\label{theorem:main_impossibility}
    For any constant predictors $T_1,\ldots, T_n$ such that $\min_i\min_{\ttheta_i\in T_i}\max_{\alpha\in\Gamma}\ttheta_i[\alpha] > 1$, there exists a distribution $D$ over agent types $\btheta$ such that for any welfare consistent (possibly randomized) Groves mechanism, $$\frac{\E_{\btheta\sim D}[p_i(\btheta)]}{\E_{\btheta\sim D}[\theta_i[\alpha^*(\btheta)]]}\le\frac{1}{1 + \ln\left(\min_{\ttheta_i\in T_i}\max_{\alpha\in\Gamma}\ttheta_i[\alpha]\right)}$$ for all agents $i$, and $$\frac{\E_{\btheta\sim D}[\text{revenue}]}{\E_{\btheta\sim D}[\text{welfare}]}\le\frac{n}{n + \sum_{i=1}^n \ln\left(\min_{\ttheta_i\in T_i}\max_{\alpha\in\Gamma}\ttheta_i[\alpha]\right)} .$$
\end{theorem}
\begin{proof}
    Fix an allocation $\alpha\in\Gamma$. Agent $i$'s type is such that $\theta_i[\alpha'] = 0$ for all $\alpha'\neq\alpha$, and $\theta_i[\alpha]$ is drawn from a truncated equal-revenue distribution on the interval $\big[1, \min_{\ttheta_i\in T_i}\ttheta_i[\alpha]\big].$ Formally, $\theta_i[\alpha]$ has the cdf $F(x) = \Pr(\theta_i[\alpha]\le x)$ given by $$F(x) = \begin{cases} 0 & x < 1 \\ 1 - 1/x & x\in \left[1, \min_{\ttheta_i\in T_i}\ttheta_i[\alpha]\right) \\ 1 & x\ge \min_{\ttheta_i\in T_i}\ttheta_i[\alpha]\end{cases}.$$ Types $\theta_1,\ldots, \theta_n$ are drawn independently. So, the efficient allocation is always $\alpha$, independent of the realization of agent types. Agent $i$'s expected efficient value is $$\E[\theta_i[\alpha]] = \int_{0}^{ \min_{\ttheta_i\in T_i}\ttheta_i[\alpha]} (1 - F(x))dx = 1 + \int_1^{\min_{\ttheta_i\in T_i}\ttheta_i[\alpha]} (1/x) dx = 1 + \ln\left(\min_{\ttheta_i\in T_i}\ttheta_i[\alpha]\right).$$ The expected efficient welfare is therefore $\E[\sum_i\theta_i[\alpha]] = \sum_i E[\theta_i[\alpha]] = n + \sum_{i=1}^n\ln(\min_{\ttheta_i\in T_i}\ttheta_i[\alpha])$.
    Now, for any deterministic Groves mechanism $\{h_i(\btheta_{-i})\}$ with $h_i(\btheta_{-i})\le\min_{\ttheta_i\in T_i}\ttheta_i[\alpha]+\sum_{j\neq i}\theta_j[\alpha]$ (which is required by welfare consistency), its expected payment captured from agent $i$ is $$\begin{aligned}\E_{\btheta}[p_i] &= \E_{\btheta}\left[\left(h_i(\btheta_{-i}) - \sum_{j\neq i}\theta_j[\alpha]\right)\mathbf{1}(w(\theta_i,\btheta_{-i})\ge h_i(\btheta_{-i}))\right] \\ &= \E_{\btheta_{-i}}\left[\E_{\theta_i}\left[\left(h_i(\btheta_{-i}) - \sum_{j\neq i}\theta_j[\alpha]\right)\mathbf{1}\left(\theta_i[\alpha]\ge h_i(\btheta_{-i}) - \sum_{j\neq i}\theta_j[\alpha]\right)\;\,\Bigg\lvert\;\,\btheta_{-i}\right]\right] \\ &= \E_{\btheta_{-i}}\left[\left(h_i(\btheta_{-i}) - \sum_{j\neq i}\theta_j[\alpha]\right)\Pr\left(\theta_i[\alpha]\ge h_i(\btheta_{-i}) - \sum_{j\neq i}\theta_j[\alpha]\right)\right] \\ &= 1,\end{aligned}$$ where the second equality follows from the law of total expectation and in the final equality we have used the fact that, given $\btheta_{-i}$, $\Pr_{\theta_i}(\theta_i[\alpha]\ge h_i(\btheta_{-i}) - \sum_{j\neq i}\theta_j[\alpha]) = \frac{1}{h_i(\btheta_{-i}) - \sum_{j\neq i}\theta_j[\alpha]}$. It follows that the expected revenue of any perfectly consistent Groves mechanism is $n$. The welfare and revenue computations extend to randomized Groves mechanisms since they hold point-wise over deterministic ones. 
\end{proof}
Theorems~\ref{theorem:robustness_overestimate} and~\ref{theorem:main_impossibility} together demonstrate that the performance of $\cM_{L,\gamma}$ cannot be significantly improved.

\section{Other forms of side information}\label{sec:other_forms}

We apply the weakest-type VCG mechanism to three other formats of side information that are distinct from the model of predictors used in the paper thus far. In each format, the weakest types are instantiated in a different way.

\subsection{A more expressive prediction language for expressing uncertainty}\label{sec:other_forms+uncertainty}

In this subsection we establish an avenue for richer and more expressive side information languages. We show that the techniques we have developed so far readily extend to an even larger more expressive form of side information that allows one to express varying degrees of uncertainty. We now allow the output of $T_i(\btheta_{-i})$ to be an entire probability space $(\Theta_i, \cF_i, \mu_i)$ where agent $i$'s ambient type space $\Theta_i$ is the sample space, $\cF_i$ is a $\sigma$-algebra on $\Theta_i$, and $\mu_i$ is a probability measure. As an example, the scenarios given in the introduction can be modified to express quantiles of uncertainty. A side information structure for quantiles would have the form: ``Agent $i$'s type satisfies property $P_1$ with probability $p_1$, property $P_2\subset P_1$ with probability $p_2 < p_1$, and so on."

$\cF_i$ induces a partition of $\Theta_i$ into equivalence classes where $\theta_i\equiv\theta'_i$ if $\mathbf{1}[\theta_i\in A] = \mathbf{1}[\theta'_i\in A]$ for all $A\in\cF_i$ (so the side-information structure does not distinguish between $\theta_i$ and $\theta'_i$). Let $A(\theta_i)=\{\theta'_i : \theta_i\equiv\theta'_i\}\in\cF_i$ be the equivalence class of $\theta_i$. In this way the $\sigma$-algebra $\cF_i$ determines the granularity of knowledge being conveyed by the predictor, and the probability measure $\mu_i:\cF_i\to[0,1]$ establishes uncertainty over this knowledge. Our model of side information in the form of a prediction set $T_i(\btheta_{-i}) = T_i$ considered previously in the paper corresponds to the $\sigma$-algebra $\cF_i = \{\emptyset, T_i, \Theta_i\setminus T_i, \Theta_i\}$ with $\mu_i(\emptyset) = \mu_i(\Theta_i\setminus T_i) = 0$ and $\mu_i(T_i) = \mu_i(\Theta_i) = 1$.

We define the error of a predictor in the natural way. As usual, $\btheta$ denotes the agents' (true and) revealed type profile. Define random variable $X_i^{\err}:\Theta_i\to\R_{\ge 0}$ by $$X_i^{\err}(\theta_i') = w(\theta_i,\btheta_{-i}) - \min_{\ttheta_i\in A(\theta_i')}w(\ttheta_i,\btheta_{-i}).$$ $X_i^{\err}$ is $\cF_i$-measurable since it is (by definition) constant on all atoms of $\cF_i$ (sets $A\in\cF_i$ such that no nonempty $B\subset A$ is in $\cF_i$). The error distribution on $\R_{\ge 0}$ is given by $$\begin{aligned}\Pr(a\le X_i^{\err}\le b) &= \mu_i\left(\left\{\theta_i'\in\Theta_i : a\le w(\theta_i,\btheta_{-i}) - \min_{\ttheta_i\in A(\theta_i')}w(\ttheta_i,\btheta_{-i})\le b\right\}\right) \\ &= \mu_i\left(\bigcup\left\{A(\theta_i'):a\le w(\theta_i,\btheta_{-i}) - \min_{\ttheta_i\in A(\theta_i')}w(\ttheta_i,\btheta_{-i})\le b\right\}\right).\end{aligned}$$

(Prior work on mechanism design with predictions has focused on deterministic predictions with deterministic errors. Allowing for a full probability space, therefore inducing error random variables, is a significant departure from, and increase in generality over, the classic algorithms-with-predictions paradigm.)

The generalized version of $M_{L,\gamma}$ that receives as input a generalized predictor for each agent $i$ given by $T_i(\btheta_{-i}) = (\Theta_i, \cF_i, \mu_i)$ works as follows. It samples $\theta_i'\sim\Theta_i$ according to $(\cF_i,\mu_i)$, sets $\ttheta_i = \argmin_{\hat{\theta}_i\in A(\theta_i')}w(\hat{\theta}_i,\btheta_{-i})$, and draws $k_i\sim_{\text{unif.}}\{0,\ldots, \lceil\log_L\Delta_i^{\VCG}/\gamma_i\rceil\}$, where $\Delta_i^{\VCG} = w(\ttheta_i,\btheta_{-i}) - w(0,\btheta_{-i})$, as before. It then implements the efficient allocation $\alpha^*$ and computes a payment for agent $i$ of $p_i = w(\ttheta_i,\btheta_{-i}) - \Delta_i^{\VCG}/L_i^{k_i} - \sum_{j\neq i}\theta_j[\alpha^*]$, excluding agents for which $p_i > \theta_i[\alpha^*]$. Applying Theorems~\ref{theorem:agentwise_welfare_bound} and~\ref{theorem:agentwise_payment_bound} for a fixed $\theta_i'$ and then taking expectation over the draw of $\theta_i'$ yields the following guarantees. 

\begin{theorem}\label{theorem:uncertainty_guarantees}
    The expected value enjoyed by agent $i$ is equal to
    $$\E_{\ttheta_i}\left[\min\left\{1,\frac{1+\lfloor\log_{L_i}(\Delta_i^{\VCG}/\err^+_i)\rfloor}{1+\lceil\log_{L_i}(\Delta_i^{\VCG}/\gamma_i)\rceil}\right\}\right]\theta_i[\alpha^*]$$ and the expected payment made by agent $i$ is at least $$\E_{\ttheta_i}\left[\min\left\{1,\frac{1+\lfloor\log_{L_i}(\Delta_i^{\VCG}/\err^+_i)\rfloor}{1+\lceil\log_{L_i}(\Delta_i^{\VCG}/\gamma_i)\rceil}\right\}(\theta_i[\alpha^*] +\err_i)-\frac{L_i \Delta_i^{\VCG} }{(L_i-1)(1+\lceil\log_{L_i}(\Delta_i^{\VCG}/\gamma_i)\rceil)}\right].$$
\end{theorem}

Despite the richness of information conveyed by an entire probability space, the above mechanism, for each agent, draws only a single sample according to $(\cF_i,\mu_i)$. An interesting research question is whether or not more samples can be used to improve its performance, or if there is a way for the mechanism to truly use all the information conveyed by $(\cF_i,\mu_i)$.

Finally, the above discussion assumes the existence of a routine for sampling from the abstract probability space specified by the predictors. We briefly discuss a concrete special form of generalized predictors for which this routine can be concretely described. For each agent $i$, $T_i(\btheta_{-i})$ outputs (i) a partition $(A^i_1,\ldots, A^i_m)$ of the ambient type space $\Theta_i$ into disjoint sets, (ii) probabilities $\mu^i_1,\ldots, \mu^i_m\ge 0; \sum_j\mu^i_j = 1$ corresponding to each partition element, and (iii) for each partition element an optional probability density function $f^i_j; \int_{A_j^i}f^i_j = 1$. The prediction represents (i) a belief over what partition element $A_j^i$ the true type $\theta_i$ lies in and (ii) if a density is specified, the precise nature of uncertainty over the true type within $A_j^i$. Our model of side information in the form of a prediction set $T_i(\btheta_{-i}) = T_i \subseteq\Theta_i$ considered earlier in the paper corresponds to the partition $(T_i, \Theta_i\setminus\ T_i)$ with $\mu(T_i) = 1$ and no specified densities. The richer model allows side information to convey finer-grained beliefs; for example one can express quantiles of certainty, precise distributional beliefs, and arbitrary mixtures of these. Here, $M_{L,\gamma}$ first samples a partition element $A_j^i$ according to $(\mu_1^i,\ldots,\mu_m^i)$, and draws $k_i\sim_{\text{unif.}}\{0,\ldots, K_i\}$ where $K_i$ is defined as before. If $f_j^i=\texttt{None}$, it uses weakest type $\ttheta_i$ that minimizes $w(\ttheta_i,\btheta_{-i})$ over $\ttheta_i\in A_j^i$. Otherwise, it samples $\ttheta_i\sim f_j^i$ and uses that as the weakest type.

\subsection{Constant-parameter agents: types on low-dimensional subspaces}
\label{section:other_forms+subspace}

In this section we show how the theory we have developed so far can be used to derive new revenue approximation results when the mechanism designer knows that each agent's type belongs to some low-dimensional subspace of her ambient type space $\R^{\Gamma}_{\ge 0}$ (these subspaces can be different for each agent). 

This is a different setup from the previous sections. So far, we have assumed that $\Theta_i = \R^{\Gamma}_{\ge 0}$ for all $i$, that is, there is an ambient type space that is common to all the agents. Side information in the form of predictors $T_i:\bigtimes_{j\neq i}\R_{\ge 0}^{\Gamma}\to\R^{\Gamma}_{\ge 0}$ are given as input to the mechanism designer, with no assumptions on quality/correctness (and our guarantees in Section~\ref{sec:main_guarantees} were parameterized by the quality of the predictors). Here, we assume the side information---which says that each agent's type lies in a particular subspace---is guaranteed to be valid. Two equivalent ways of stating this setup are (1) that $\Theta_i$ is the corresponding subspace for agent $i$ and the mechanism designer receives no additional predictor $T_i$ or (2) $\Theta_i =\R^{\Gamma}_{\ge 0}$ and $T_i(\btheta_{-i}) = \R^{\Gamma}_{\ge 0}\cap U_i$ where $U_i$ is a subspace of $\R^{\Gamma}$, and the mechanism designer has the additional guarantee that $\theta_i\in U_i$ (so $T_i$ is a \emph{correct} predictor in that the set it outputs actually contains the agent's true type). We shall use the language of the second interpretation.

In this setting, while the predictors are correct (which implies $\err_i(\btheta_{-i})\le 0$), their errors $|\err_i|$ can be too large in magnitude to meaningfully use our previous guarantees. More precisely, since $T_i$ outputs an entire linear subspace of the ambient type space, it contains low welfare types---in particular it contains the zero type that creates welfare $w(0, \btheta_{-i})$. So, our mechanism from Section~\ref{sec:main_mechanism} is not useful here as its revenue will be no better than vanilla VCG.

In this section we show how to fruitfully use the information provided by the subspaces $U_1,\ldots,U_n$ within the framework of our meta-mechanism. We assume $\Theta_i=[1,H]^{\Gamma}$, thereby imposing a lower bound of $1$ (this choice of lower bound is not important, but the knowledge of some lower bound is needed) and an upper bound of $H$ on agent values. The following more direct bound on $p_i$ in terms of the weakest type's value will be needed.

\begin{lemma}\label{lemma:payment_lb} Let $T_i$ be a predictor such that $\err_i(\btheta_{-i})\le 0$. Its weakest-type VCG price $p_i^{\WT}=\min_{\ttheta_i\in T_i(\btheta_{-i})}w(\ttheta_i,\btheta_{-i}) - \sum_{j\neq i}\theta_j[\alpha^*]$ satisfies $p_i^{\WT}\ge\ttheta_i[\alpha^*]$, where $\ttheta_i$ is the weakest type that minimizes $w(\ttheta_i,\btheta_{-i})$ over $\ttheta_i\in T_i(\btheta_{-i})$ and $\alpha^*$ is the efficient allocation achieving $w(\btheta)$.
\end{lemma}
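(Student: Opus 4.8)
The plan is to directly manipulate the payment formula in Equation~\eqref{eq:wcvcg-payment}, exploiting the fact that $\alpha^*$ is efficient for the revealed profile $\btheta$ but need not remain efficient once agent $i$ is replaced by her weakest type. First I would use efficiency of $\alpha^*$ on $\btheta$ to write $w(\theta_i,\btheta_{-i}) = \sum_{j=1}^n\theta_j[\alpha^*]$, so that Equation~\eqref{eq:wcvcg-payment} collapses to
$$p_i = \min_{\ttheta_i\in T_i(\btheta_{-i})} w(\ttheta_i,\btheta_{-i}) - \sum_{j\neq i}\theta_j[\alpha^*] = w(\ttheta_i,\btheta_{-i}) - \sum_{j\neq i}\theta_j[\alpha^*],$$
where $\ttheta_i$ denotes the weakest type attaining the minimum (this is just the weakest-type VCG payment in the boxed description, specialized to the predictor $T_i$).

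Next I would lower-bound $w(\ttheta_i,\btheta_{-i}) = \max_{\alpha\in\Gamma}\bigl(\ttheta_i[\alpha] + \sum_{j\neq i}\theta_j[\alpha]\bigr)$ by plugging in the (possibly suboptimal, for this perturbed profile) allocation $\alpha^*$, giving $w(\ttheta_i,\btheta_{-i}) \ge \ttheta_i[\alpha^*] + \sum_{j\neq i}\theta_j[\alpha^*]$. Substituting this into the displayed expression for $p_i$ cancels the $\sum_{j\neq i}\theta_j[\alpha^*]$ terms and yields $p_i \ge \ttheta_i[\alpha^*]$, which is exactly the claim.

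There is no genuine obstacle: the argument is a two-line consequence of the suboptimality of $\alpha^*$ on the profile $(\ttheta_i,\btheta_{-i})$. The only point worth flagging is the role of the hypothesis $\Delta_i^{\err}(\btheta_{-i})\ge 0$ — it is not needed for the inequality itself, but it ensures $p_i \le \theta_i[\alpha^*]$ (equivalently, $p_i = \theta_i[\alpha^*] - \Delta_i^{\err}$ with $\Delta_i^{\err}\ge 0$), so that $p_i$ is the payment actually charged rather than one triggering the IR-exclusion step; this is what makes the bound useful for the subspace results that follow.
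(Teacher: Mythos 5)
Your proposal is correct and is essentially the paper's own argument: both hinge on the single inequality $w(\ttheta_i,\btheta_{-i}) \ge \ttheta_i[\alpha^*] + \sum_{j\neq i}\theta_j[\alpha^*]$ obtained by evaluating the perturbed profile $(\ttheta_i,\btheta_{-i})$ at the (possibly suboptimal) allocation $\alpha^*$, the only cosmetic difference being that the paper phrases it as an upper bound on agent $i$'s utility $\theta_i[\alpha^*]-p_i$ while you rearrange the payment directly. Your side remark that $\Delta_i^{\err}(\btheta_{-i})\ge 0$ is not needed for the inequality itself but guarantees the payment respects IR is also accurate.
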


\begin{proof}
    Let $\ttheta_i$ be the weakest type in $\Theta_i$ with respect to $\btheta_{-i}$. Since $\err_i\le 0$, weakest type prices are IR. The utility for agent $i$ is $$\begin{aligned}
    \theta_i[\alpha^*] - p_i^{\WT} 
    &= \sum_{j=1}^n\theta_j[\alpha^*] - \min_{\theta_i'\in T_i(\btheta_{-i})}\left(\max_{\alpha\in\Gamma}\sum_{j\neq i}\theta_j[\alpha] + \theta_i'[\alpha]\right) \\ 
    &= \sum_{j=1}^n\theta_j[\alpha^*] - \left(\max_{\alpha\in\Gamma}\sum_{j\neq i}\theta_j[\alpha] + \ttheta_i[\alpha]\right) \\ 
    &\le \sum_{j=1}^n\theta_j[\alpha^*] - \left(\sum_{j\neq i}\theta_j[\alpha^*] + \ttheta_i[\alpha^*]\right) \\
    &= \theta_i[\alpha^*] - \ttheta_i[\alpha^*],
\end{aligned}$$ so $p_i^{\WT}\ge \ttheta_i[\alpha^*]$, as desired. 
\end{proof}

We now describe the formal ingredients of our mechanism. For each $i$, the mechanism designer knows that $\theta_i$ lies in a $k$-dimensional subspace $U_i=\text{span}(u_{i,1},\ldots,u_{i,k})$ of $\R^{\Gamma}$ where each $u_{i,j}\in\R^{\Gamma}_{\ge 0}$ lies in the non-negative orthant and $\{u_{i,1},\ldots, u_{i,k}\}$ is an orthonormal basis for $U_i$ ($U_i$ can depend on $\btheta_{-i}$). For simplicity, assume $H = 2^a$ for some positive integer $a$. Let $\cL_{i,j} = \left\{\lambda u_{i,j} : \lambda\ge 0\right\}\cap [0, H]^{\Gamma}$ be the line segment that is the portion of the ray generated by $u_{i,j}$ that lies in $[0, H]^{\Gamma}$. Let $y_{i,j}$ be the endpoint of $\cL_{i,j}$ with $\lVert y_{i,j}\rVert_{\infty} = H$ (the other endpoint of $\cL_{i,j}$ is the origin). Let $z_{i,j}^1 = y_{i,j}/2$ be the midpoint of $\cL_{i,j}$, and for $\ell = 2,\ldots, \log_2 H$ let $z_{i,j}^{\ell}=z_{i,j}^{\ell-1}/2$ be the midpoint of $\overline{0z_{i,j}^{\ell-1}}$. So $\lVert z_{i,j}^{\log_2 H}\rVert_{\infty} = 1$. We terminate the halving of $\cL_{i,j}$ after $\log_2H$ steps due to the assumption that $\theta_i\in [1,H]^{\Gamma}$. For every $k$-tuple $(\ell_1,\ldots, \ell_k)\in\{1,\ldots, \log_2H\}^k$, let $$\ttheta_i(\ell_1,\ldots, \ell_k) = \sum_{j=1}^kz_{i,j}^{\ell_j}.$$ Furthermore, let $$W_{\ell} = \left\{(\ell_1,\ldots, \ell_k)\in\left\{1,\ldots,\log_2 H\right\}^k : \min_j\ell_j = \ell\right\}.$$ The sets $W_1,\ldots, W_{\log_2 H}$ form a partition of $\{1,\ldots,\log_2 H\}^k$ into levels, where $W_{\ell}$ is the set of points with $\ell_{\infty}$-distance $H/2^{\ell}$ from the origin. 


\begin{figure}[t]
\begin{tcolorbox}[colback=white, parbox=false]
\underline{Mechanism $\cM_{k}$}\\
Input: {\em correct} subspace predictions $U_1(\btheta_{-1}),\ldots, U_n(\btheta_{-n})$.
    \begin{itemize}
        \item Agents asked to reveal types $\theta_1,\ldots,\theta_n$.
        \item Let $\alpha^* = \argmax_{\alpha\in\Gamma}\sum_{i=1}^n\theta_i[\alpha]$ be the efficient allocation. For each agent $i$ let $$p_i = w\left(\ttheta_i(\ell_{i, 1},\ldots, \ell_{i,k}),\btheta_{-i}\right) - \sum_{j\neq i}\theta_j[\alpha^*]$$ where $\ell_i$ is chosen uniformly at random from the set $\{1,\ldots,\log_2 H\}$ and $(\ell_{i,1},\ldots,\ell_{i,k})$ is chosen uniformly at random from $W_{\ell_i}$.
        \item Let $\cI = \left\{i : \theta_i[\alpha^*] - p_i\ge 0\right\}.$ If agent $i\notin\cI$, $i$ is excluded and receives zero utility (zero value and zero payment).
    \end{itemize}
\end{tcolorbox}
\end{figure}
We now state and prove the welfare and revenue guarantees satisfied by $\cM_{k}$. In the proof, we use the notation $\theta_i\succeq\theta_i'$ to mean $\theta_i[\alpha]\ge\theta_i'[\alpha]$ for all $\alpha\in\Gamma$. Clearly, $\theta_i\succeq\theta_i'\implies w(\theta_i,\btheta_{-i})\ge w(\theta_i',\btheta_{-i})$.

\begin{theorem}\label{theorem:subspace}
    $\cM_k$ satisfies $\E[\text{welfare}]\ge\frac{\OPT}{\log_2 H}$ and $\E[\text{revenue}]\ge\frac{\OPT}{2k(\log_2 H)^{k}}.$
\end{theorem}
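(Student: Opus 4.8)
\quad Write $L=\log_2 H$, and for a generic tuple $\vec{\ell}=(\ell_1,\dots,\ell_k)$ write $\ttheta_i(\vec{\ell})=\ttheta_i(\ell_1,\dots,\ell_k)$; let $\vec{\ell}_i=(\ell_{i,1},\dots,\ell_{i,k})$ be the tuple $\cM_k$ draws for agent $i$ (so $\ell_i$ is uniform on $\{1,\dots,L\}$ and, given $\ell_i$, $\vec{\ell}_i$ is uniform on $W_{\ell_i}$). Since $\cM_k$ implements $\alpha^*$ and drops exactly the agents outside $\cI$, its realized welfare is $\sum_{i\in\cI}\theta_i[\alpha^*]$ and its realized revenue is $\sum_{i\in\cI}p_i$; moreover $p_i\ge 0$ always, because $w(\ttheta_i(\vec{\ell}_i),\btheta_{-i})\ge\sum_{j\neq i}\theta_j[\alpha^*]$ as $\ttheta_i(\vec{\ell}_i)\ge 0$. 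So the plan is to establish two per-agent estimates, $\Pr(i\in\cI)\ge 1/L$ and $\E[p_i\,\mathbf{1}[i\in\cI]]\ge\theta_i[\alpha^*]/(2kL^k)$, and then sum over $i$ using $\sum_i\theta_i[\alpha^*]=\OPT$.

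First I would record the structure of the subspaces. Because each $u_{i,j}\ge 0$ and the $u_{i,j}$ are orthonormal, they have pairwise disjoint supports; and since $\theta_i\in[1,H]^{\Gamma}$ has every coordinate at least $1$ while $\theta_i=\sum_j c_{i,j}u_{i,j}$ with $c_{i,j}=\langle\theta_i,u_{i,j}\rangle\ge 0$ is supported on $\bigcup_j\operatorname{supp}(u_{i,j})$, those supports must in fact partition $\Gamma$. Hence on the block $\operatorname{supp}(u_{i,j})$ we have $\theta_i=c_{i,j}u_{i,j}$, so $h_{i,j}:=\|c_{i,j}u_{i,j}\|_\infty\in[1,H]$ (it is $\le\|\theta_i\|_\infty\le H$ and $\ge 1$ since $\theta_i\ge 1$ there), and for any tuple $\vec{\ell}$ the value $\ttheta_i(\vec{\ell})[\alpha^*]=\sum_j z_{i,j}^{\ell_j}[\alpha^*]$ collapses to the single term $z_{i,\bar j}^{\ell_{\bar j}}[\alpha^*]$ from the unique block $\bar j=\bar j(i)$ with $\alpha^*\in\operatorname{supp}(u_{i,\bar j})$. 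For each block $j$ let $\ell_j^\star$ be the least $\ell\in\{1,\dots,L\}$ with $z_{i,j}^{\ell}\preceq c_{i,j}u_{i,j}$; since $z_{i,j}^{\ell}$ is the multiple of $u_{i,j}$ of $\ell_\infty$-norm $H/2^{\ell}$, this is equivalent to $2^{\ell}\ge H/h_{i,j}$, so $\ell_j^\star$ exists (the level $\ell=L$ always works, as $h_{i,j}\ge 1$) and $2^{\ell_j^\star}\le 2H/h_{i,j}$ yields the ``half-recovery'' bound $z_{i,j}^{\ell_j^\star}\succeq\tfrac12 c_{i,j}u_{i,j}$. Finally, $\ttheta_i(\vec{\ell})\preceq\theta_i$ iff $\ell_j\ge\ell_j^\star$ for all $j$, and any such event forces $w(\ttheta_i(\vec{\ell}),\btheta_{-i})\le w(\theta_i,\btheta_{-i})=\OPT$, i.e.\ $i\in\cI$.

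The welfare bound is then immediate: with probability $1/L$ the draw gives $\ell_i=L$, which forces $\vec{\ell}_i=(L,\dots,L)$, hence $\ttheta_i(\vec{\ell}_i)\preceq\theta_i$ and $i\in\cI$; so $\Pr(i\in\cI)\ge 1/L$ and $\E[\text{welfare}]\ge\OPT/L$. For revenue I would put $\ell^\star=\min_j\ell_j^\star$ and consider the single ``minimal feasible'' tuple $\vec{\ell}^{\,G}$ with $\ell^{G}_j=\max(\ell^\star,\ell_j^\star)$: it lies in $W_{\ell^\star}$ (some coordinate attains $\ell^\star$ and all are $\ge\ell^\star$), dominates every $\ell_j^\star$, and has $\ell^{G}_{\bar j}=\ell_{\bar j}^\star$. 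On the event $\{\vec{\ell}_i=\vec{\ell}^{\,G}\}$ we get $i\in\cI$, and Lemma~\ref{lemma:payment_lb} (applied with the valid singleton prediction $\{\ttheta_i(\vec{\ell}^{\,G})\}$, valid because $\ttheta_i(\vec{\ell}^{\,G})\preceq\theta_i$) gives $p_i\ge\ttheta_i(\vec{\ell}^{\,G})[\alpha^*]=z_{i,\bar j}^{\ell_{\bar j}^\star}[\alpha^*]\ge\tfrac12\theta_i[\alpha^*]$. Since $\Pr(\vec{\ell}_i=\vec{\ell}^{\,G})=\frac1L\cdot\frac1{|W_{\ell^\star}|}$ and every tuple in $W_{\ell^\star}$ has some coordinate equal to $\ell^\star$ with the remaining $k-1$ coordinates lying in the $(\le L)$-element set $\{\ell^\star,\dots,L\}$, a union bound gives $|W_{\ell^\star}|\le kL^{k-1}$, hence $\Pr(\vec{\ell}_i=\vec{\ell}^{\,G})\ge 1/(kL^k)$; combining and summing over agents yields $\E[\text{revenue}]\ge\OPT/(2k(\log_2 H)^k)$.

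The crux — and the step I expect to need the most care — is the revenue argument: picking in each block the \emph{coordinate-wise smallest} discretization level still dominated by $\theta_i$ is exactly what makes the candidate type simultaneously $\cI$-feasible and worth a constant fraction of $\theta_i[\alpha^*]$ at the coordinate $\alpha^*$, and the sharp count $|W_\ell|\le kL^{k-1}$ (rather than the trivial $|W_\ell|\le L^k$) is what turns the $1/\log_2 H$ one would naively obtain into the claimed $1/k$. I would also double-check the boundary cases $h_{i,j}\in\{1,H\}$ in the definition of $\ell_j^\star$ and the consistency requirement that $[1,H]^{\Gamma}\cap U_i$ be nonempty, which is what guarantees the supports of the $u_{i,j}$ genuinely partition $\Gamma$.
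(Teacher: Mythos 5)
Your proposal is correct and follows essentially the same route as the paper: welfare via the top level $\ell_i=\log_2 H$, and revenue by conditioning on the tuple of per-block minimal levels dominated by $\theta_i$ (your $\vec{\ell}^{\,G}$ is exactly the paper's $(\ell_{i,1}^*,\ldots,\ell_{i,k}^*)$, since $\max(\ell^\star,\ell_j^\star)=\ell_j^\star$), invoking Lemma~\ref{lemma:payment_lb} for the half-recovery payment bound and the count $|W_{\ell}|\le k(\log_2 H)^{k-1}$. The only differences are cosmetic: you make explicit the disjoint-support/partition structure and argue the $\tfrac12\theta_i[\alpha^*]$ bound via the single block containing $\alpha^*$, whereas the paper shows coordinate-wise $\ttheta_i(\ell^*)\succeq\tfrac12\theta_i$ directly and uses the exact level-set count $(\log_2 H)^k-(\log_2 H-1)^k$.
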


\begin{proof}
We have $\E[\text{welfare}]=\sum_{i=1}^n\theta_i[\alpha^*]\cdot\Pr(w(\theta_i,\btheta_{-i})\ge w(\ttheta(\ell_{i,1},\ldots,\ell_{i,k}),\btheta_{-i}))\ge\sum_{i=1}^n\theta_i[\alpha^*]\cdot\Pr(\ell_i = \log_2 H) = \frac{1}{\log_2 H}\cdot\OPT$ (since $\theta_i\succeq\ttheta_i(\log_2H,\ldots,\log_2H)$). The proof of the revenue guarantee relies on the following key claim: for each agent $i$, there exists $\ell_{i,1}^*,\ldots, \ell_{i,k}^*\in\{1,\ldots,\log_2H\}$ such that $\ttheta(\ell_{i,1}^*,\ldots,\ell_{i,k}^*)\succeq\frac{1}{2}\theta_i$. To show this, let $\theta_i^j$ denote the projection of $\theta_i$ onto $u_j$, so $\theta_i = \sum_{j=1}^k \theta_i^j$ since $\{u_{i,1},\ldots,u_{i,k}\}$ is an orthonormal basis. Let $\ell_{i, j}^* = \min\{\ell : \theta_i^j\succeq z_{i, j}^{\ell}\}$. Then, $z_{i, j}^{\ell_{i, j}^*}\succeq\frac{1}{2}\theta_i^j$, so $$\ttheta(\ell_{i, 1}^*,\ldots,\ell_{i, k}^*) = \sum_{j=1}^k z_{i, j}^{\ell_{i, j}^*}\succeq\sum_{j=1}^k\frac{1}{2}\theta_i^j = \frac{1}{2}\theta_i.$$ We now bound the expected payment. Let $\ell_{i}^* = \min_j\ell_{i, j}^*$. We have
\begin{align*}
    \E[p_i] &\ge\E\big[p_i\mid (\ell_{i, 1},\ldots,\ell_{i, k}) = (\ell_{i, 1}^*,\ldots,\ell_{i,k}^*)\big]\cdot\Pr\big((\ell_{i, 1},\ldots,\ell_{i, k}) = (\ell_{i, 1}^*,\ldots,\ell_{i,k}^*)\big) \\ 
    &= \frac{1}{|W_{\ell_i^*}|\log_2 H}\cdot \E\big[p_i\mid (\ell_{i, 1},\ldots,\ell_{i, k}) = (\ell_{i, 1}^*,\ldots,\ell_{i,k}^*)\big] \\ 
    &\ge \frac{1}{\log_2 H((\log_2 H)^{k} - (\log_2 H - 1)^{k})}\cdot\E\big[p_i\mid (\ell_{i, 1},\ldots,\ell_{i, k}) = (\ell_{i, 1}^*,\ldots,\ell_{i,k}^*)\big] \\ 
    &\ge \frac{1}{k(\log_2 H)^{k}}\cdot\E\big[p_i\mid (\ell_{i, 1},\ldots,\ell_{i, k}) = (\ell_{i, 1}^*,\ldots,\ell_{i,k}^*)\big]    
\end{align*} since the probability of obtaining the correct type $\ttheta(\ell_{i,1}^*,\ldots,\ell_{i,k}^*)$ can be written as the probability of drawing the correct ``level'' $\ell_{i}^*\in\{1,\ldots,\log_2 H\}$ times the probability of drawing the correct type within the correct level $W_{\ell_i^*}$. We bound the conditional expectation with Lemma~\ref{lemma:payment_lb}: $$\E\big[p_i\mid (\ell_{i, 1},\ldots,\ell_{i, k}) = (\ell_{i, 1}^*,\ldots,\ell_{i,k}^*)\big] \ge\ttheta_i(\ell_{i, 1}^*,\ldots,\ell_{i, k}^*)[\alpha^*] \ge \frac{1}{2}\cdot\theta_i[\alpha^*].$$ Finally, $$\E[\text{revenue}] = \sum_{i=1}^n\E[p_i] \ge \frac{1}{2k(\log_2 H)^{k}}\cdot\sum_{i=1}^n\theta_i[\alpha^*]=\frac{1}{2k(\log_2 H)^{k}}\cdot\OPT,$$ as desired. 
\end{proof}

$\cM_k$ can be viewed as a generalization of the (tight) $\log H$ revenue approximation in the single-item limited-supply setting that is achieved by a second-price auction with a uniformly random reserve price from $\{H/2, H/4,\ldots, 1\}$~\citep{Goldberg01:Competitive} (and it recovers that guarantee when $k=1$). Our results apply not only to auctions but to general multidimensional mechanism design problems such as the examples presented in Section~\ref{subsection:examples}.

\subsection{Revenue-optimal Groves mechanisms given a known prior}\label{sec:other_forms+myerson}

In this section we consider a textbook mechanism design setup wherein the mechanism designer has access to a joint prior distribution over a joint type space $\Theta\subseteq\bigtimes_{i=1}^n \R_{\ge 0}^{\Gamma}$ over the agents. We formulate the design of the Groves mechanism that maximizes expected revenue over the prior subject to no other constraints other than IC and IR (in particular efficiency is no longer a constraint as in Section~\ref{section:problem_formulation}). We show that the problem reduces to $n$ independent single-parameter optimization problems for each agent, though we do not pursue the issue of deriving a closed form/more explicit characterization. The optimization problem for each agent depends on that agent's weakest type. 

For each agent $i$, the revealed type vector $\btheta_{-i}$ of all other agents induces a conditional distribution $D_i$ over agent $i$'s type. The mechanism designer can then optimize over that conditional distribution directly, and separately, for each agent. The payment-maximizing Groves mechanism can therefore be written as: \begin{equation}\label{eq:revoptgroves}h_i(\btheta_{-i}) = \argmax_{w\ge \min_{\ttheta_i\in\mathrm{supp}(D_i)} w(\ttheta_i, \btheta_{-i})}~\E_{\widehat{\theta}_i\sim D_i}\left[\left(\widehat{\theta}_i[\alpha^*(\widehat{\theta}_i,\btheta_{-i})] - w(\widehat{\theta}_i,\btheta_{-i}) + w\right)\mathbf{1}\left[w\le w(\widehat{\theta}_i,\btheta_{-i})\right]\right]\end{equation} where $\mathrm{supp}(D_i)$ is the support of $D_i$. It charges agent $i$ a payment of $h_i(\btheta_{-i}) - \sum_{j\neq i}\theta_j[\alpha^*]$, excluding agents who are charged more than their value from the final allocation (that is, agents for which $h_i(\btheta_{-i}) > w(\btheta)$). The weakest type in agent $i$'s type space is inherently baked into the optimization to compute $h_i(\btheta_{-i})$. Indeed, the welfare contributed by agent $i$ is lower bounded by $\min_{\ttheta_i\in\mathrm{supp}(D_i)}w(\ttheta_i,\btheta_{-i})$. If $D_i$ is supported on the entire ambient type space $\R_{\ge 0}^{\Gamma}$, the weakest type is the zero type, and welfare is lower bounded by $w(0,\btheta_{-i})$, so the expected payment extracted by $h_i(\btheta_{-i})$ is lower bounded by the expected vanilla VCG payment.

Finally, we remark that the single-parameter optimization problem to compute $h_i(\btheta_{-i})$ really only involves a two-dimensional joint distribution over $(\widehat{\theta}_i[\alpha^*(\widehat{\theta}_i,\btheta_{-i})], w(\widehat{\theta}_i,\btheta_{-i}))\in\R^2$, that is, the induced joint distribution over agent $i$'s value in the efficient allocation and the welfare she creates (rather than the full type distribution which is $|\Gamma|$ dimensional).

\paragraph{Sale of a single indivisible item to multiple bidders} We apply the above approach to the sale-of-a-single-item setting with independent, symmetric, and regular (a distribution with cumulative density function $F:\R\to[0,1]$ and continuous probability density function $f$ is {\em regular} if $\varphi(v) = v- \frac{1-F(v)}{f(v)}$ is monotonically increasing in $v$) prior distributions over bidders' values. We show that this recovers Myerson's revenue optimal auction~\citep{Myerson81:Optimal} which in this case is a second-price auction with reserve price $\varphi^{-1}(0)$. 

\begin{theorem}
    When there is a single indivisible item for sale and bidders' values are independently drawn from the same regular distribution, the revenue-optimal Groves' mechanism~\eqref{eq:revoptgroves} is globally revenue optimal.
\end{theorem}

\begin{proof}
We denote bidder $i$'s true and revealed value by $v_i$. In the single-item setting the efficient allocation gives the item to the highest bidder, so all other bidders receive and pay nothing. Let $i = 1$ be the index of the highest bidder and let $i = 2$ be the index of the second-highest bidder. Then, $w(\widehat{v}_1, \vec{v}_{-1}) = \max\{\widehat{v}_1, v_2\}$ and $\widehat{v}_1[\alpha^*(\widehat{v}_1, \vec{v}_{-1})] = \widehat{v}_1\cdot\mathbf{1}[\widehat{v}_1 > v_2]$ (we abbreviate $\alpha^*(\widehat{v}_1, \vec{v}_{-1})$ as just $\alpha^*$ in the following). Let $F$ denote the cumulative density function that is common to all bidders and let $f$ be its probability density function. We have
$$\begin{aligned}
h_1(\vec{v}_{-1}) &= \argmax_{w\ge v_2} \E_{\widehat{v}_1}\left[\left(\widehat{v}_1[\alpha^*] - w(\widehat{v}_1,\vec{v}_{-i}) + w\right)\cdot\mathbf{1}[w\le w(\widehat{v}_1, \vec{v}_{-1})]\right] \\
&= \argmax_{w\ge v_2} \E_{\widehat{v}_1}\left[\left(\widehat{v}_1[\alpha^*] - w(\widehat{v}_1,\vec{v}_{-i}) + w\right)\cdot\mathbf{1}[w\le w(\widehat{v}_1, \vec{v}_{-1})]\mid \widehat{v}_1\ge v_2\right]\cdot\Pr(\widehat{v}_1\ge v_2) \\ &\qquad\qquad\;\;+ \E_{\widehat{v}_1}\left[\left(\widehat{v}_1[\alpha^*] - w(\widehat{v}_1,\vec{v}_{-i}) + w\right)\cdot\mathbf{1}[w\le w(\widehat{v}_1, \vec{v}_{-1})]\mid \widehat{v}_1 < v_2\right]\cdot\Pr(\widehat{v}_1 < v_2) \\
&= \argmax_{w\ge v_2} \E_{\widehat{v}_1}\left[w\cdot\mathbf{1}[w\le\widehat{v}_1]\mid\widehat{v}_1\ge v_2\right]\cdot (1-F(v_2)) + \underbrace{(w-v_2)\cdot\mathbf{1}[w\le v_2]\cdot F(v_2)}_{=~0\text{ as } w\ge v_2} \\
&= \argmax_{w\ge v_2} w(1-F(w))
\end{aligned}$$ which is achieved at $h_1(\vec{v}_{-1}) = \max\{v_2, \varphi^{-1}(0)\}$ where $\varphi(w) = w - \frac{1 - F(w)}{f(w)}$ is the virtual value function. This is precisely a second-price auction with reserve price $\varphi^{-1}(0)$, which is equivalent to Myerson's optimal auction in this setting (symmetric, regular, and independent bidder priors). 
\end{proof}

It is clear that in general settings this approach does not yield the revenue optimal auction. Indeed, it is well known that the revenue optimal mechanism in multi-item settings is randomized~\citep{Conitzer02:Complexity,hart2013menu,hart2015maximal,Conitzer03:Applications} but the mechanism we present is deterministic. Furthermore, we limit ourselves to a subclass of Groves mechanisms which place a strong restriction on the set of allocations that can be realized---a bidder receives either her winning bundle in the efficient allocation or the empty bundle---while Myerson's revenue-optimal auction even in more general single-item settings can sell the item to a bidder other than the highest bidder.

\section{Beyond VCG: weakest-type affine-maximizer mechanisms}
\label{sec:amas}

An {\em affine-maximizer} (AM) mechanism~\citet{Roberts79:characterization} is a generalization of VCG that modifies the allocation and payments via agent-specific multipliers $\omega=(\omega_1,\ldots,\omega_n)\in\R_{\ge 0}$ and an allocation-based boost function $\tau:\Gamma\to\R_{\ge 0}$. We define the {\em weakest-type} AM in the following natural way (adopting the setup and notation of Section~\ref{sec:wcvcg}). 

The weakest-type AM parameterized by $\omega,\tau$ works as follows. Agents' types $\theta_1,\ldots,\theta_n$ are elicited, the allocation used is $$\alpha_{\omega,\tau} = \argmax_{\alpha\in\Gamma}\sum_{i=1}^n\omega_{i}\theta_i[\alpha] + \tau(\alpha),$$ and bidder $i$ is charged payment $$p_i = \frac{1}{\omega_i}\left(\min_{\ttheta_i: (\ttheta_i,\btheta_{-i})\in\bTheta}\left(\max_{\alpha\in\Gamma}\sum_{j\neq i}\omega_j\theta_j[\alpha] 
 + \omega_i\ttheta_i[\alpha]+\tau(\alpha)\right) - \left(\sum_{j\neq i}\omega_j\theta_j[\alpha_{\omega,\tau}]+\tau(\alpha_{\omega,\tau})\right)\right).$$ In a vanilla AM there is no minimization and no $\omega_i\ttheta_i[\alpha]$ term, so it is already clear that the weakest-type AM is a strict revenue improvement over the vanilla AM. We now generalize Theorem~\ref{theorem:rev_optimal} to the affine-maximizer setting.
 
 \begin{theorem}
    For any $\omega\in\R_{\ge 0}^n$ and $\tau:\Gamma\to\R_{\ge 0}$, the weakest-type AM parameterized by $\omega$ and $\tau$ is incentive compatible and individually rational. Furthermore, if $\bTheta$ is convex, it is revenue optimal among all incentive compatible and individually rational mechanisms that implement the allocation function $\btheta\mapsto\argmax_{\alpha\in\Gamma}\sum_{i=1}^n\omega_i\theta_i[\alpha] +\tau(\alpha)$.
 \end{theorem}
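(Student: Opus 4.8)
The plan is to reduce the claim to the VCG case already established in Theorem~\ref{theorem:rev_optimal}. The key observation is that the linear reparametrization $\theta_i\mapsto\omega_i\theta_i$ turns the affine-maximizer allocation rule into a plain welfare-maximizing rule for the rescaled types, with $\tau$ acting as a fixed allocation-indexed constant that does not interfere with any Groves-style argument; one can then mirror, essentially verbatim, the three parts of the proof of Theorem~\ref{theorem:rev_optimal} — IC, IR, and revenue optimality — with the objective $\sum_i\omega_i\theta_i[\alpha]+\tau(\alpha)$ in place of $\sum_i\theta_i[\alpha]$. (Throughout I assume $\omega_i>0$; the degenerate case $\omega_i=0$, where agent $i$ does not influence the allocation and is charged $0$, is trivial.)

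\emph{IC and IR.} Write $P_i(\btheta_{-i})=\min_{\ttheta_i:(\ttheta_i,\btheta_{-i})\in\bTheta}\max_{\alpha\in\Gamma}\big(\sum_{j\neq i}\omega_j\theta_j[\alpha]+\omega_i\ttheta_i[\alpha]+\tau(\alpha)\big)$ for the ``pivot'' term; it has no dependence on agent $i$'s report. Agent $i$'s utility when she reports $\theta_i'$ is $\theta_i[\alpha_{\omega,\tau}(\theta_i',\btheta_{-i})]-p_i$, and multiplying by $\omega_i>0$ this equals $\big(\sum_j\omega_j\theta_j[\alpha_{\omega,\tau}(\theta_i',\btheta_{-i})]+\tau(\alpha_{\omega,\tau}(\theta_i',\btheta_{-i}))\big)-P_i(\btheta_{-i})$; since $P_i(\btheta_{-i})$ is constant and $\alpha_{\omega,\tau}(\theta_i,\btheta_{-i})$ maximizes $\sum_j\omega_j\theta_j[\cdot]+\tau(\cdot)$ at the true type, reporting $\theta_i$ is optimal — the affine-maximizer analogue of the Groves argument. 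For IR, because $(\theta_i,\btheta_{-i})\in\bTheta$ the true type is feasible in the minimization defining $P_i$, so $P_i(\btheta_{-i})\le\max_\alpha\big(\sum_j\omega_j\theta_j[\alpha]+\tau(\alpha)\big)$, which is exactly $\omega_i$ times agent $i$'s value minus payment at the truthful profile; hence her utility is nonnegative. The same computation shows that at the \emph{weakest-type profile} $(\ttheta_i,\btheta_{-i})$, where $\ttheta_i$ attains the minimum in $P_i$, the inequality is tight — agent $i$'s IR constraint binds and she receives exactly zero utility. This binding fact drives the revenue argument.

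\emph{Revenue optimality.} Fix the allocation rule $\btheta\mapsto\alpha_{\omega,\tau}(\btheta)$, which is implementable. Since $\bTheta$ is convex — and, as in Theorem~\ref{theorem:rev_optimal}, we may write $\bTheta=\Theta_1(\btheta_{-1})\times\cdots\times\Theta_n(\btheta_{-n})$ with each section $\Theta_i(\btheta_{-i})$ convex — the revenue equivalence theorem (\citet[Theorem 4.3.1]{vohra2011mechanism}, which applies to any implementable allocation rule on a convex domain) gives that any IC payment rule $p_i'$ implementing $\alpha_{\omega,\tau}$ satisfies $p_i'(\theta_i,\btheta_{-i})=p_i(\theta_i,\btheta_{-i})+h_i(\btheta_{-i})$ for some $h_i$, where $p_i$ is the weakest-type AM payment. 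If some profile $\btheta\in\bTheta$ had $\sum_ip_i'(\btheta)>\sum_ip_i(\btheta)$, then $\sum_ih_i(\btheta_{-i})>0$ and hence $h_{i^*}(\btheta_{-i^*})>0$ for some $i^*$; evaluating at $(\ttheta_{i^*},\btheta_{-i^*})$, where weakest-type AM makes $i^*$'s IR bind, the mechanism $(\alpha_{\omega,\tau},p_i')$ gives $\ttheta_{i^*}$ utility $-h_{i^*}(\btheta_{-i^*})<0$, contradicting IR. Thus $\sum_ip_i'\le\sum_ip_i$ on all of $\bTheta$. Alternatively, one may give the Holmstr\"{o}m-style argument of Theorem~\ref{theorem:rev_optimal}: after the change of variables $\theta_i\mapsto\omega_i\theta_i$ (folding $\tau$ into a fixed additive term), \citet{Holmstroem79:Groves} shows every IC mechanism implementing this rule on a convex domain is a Groves mechanism, so $\omega_ip_i=h_i(\btheta_{-i})-\sum_{j\neq i}\omega_j\theta_j[\alpha_{\omega,\tau}]-\tau(\alpha_{\omega,\tau})$; imposing IR for all $\ttheta_i$ with $(\ttheta_i,\btheta_{-i})\in\bTheta$ forces $h_i(\btheta_{-i})\le P_i(\btheta_{-i})$, and equality — which is precisely weakest-type AM — maximizes revenue.

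\emph{Main obstacle.} The subtlety is not any single computation but making the reduction airtight: one must verify that revenue equivalence (equivalently, the Groves characterization) genuinely applies to $\alpha_{\omega,\tau}$ on the possibly non-product convex domain $\bTheta$, fix a consistent tie-breaking rule in the $\argmax$, and confirm that the additive boost $\tau$ — which prevents the rescaled problem from being ``efficient'' in the literal sense — leaves the Groves/IC structure intact (it does, being an allocation-indexed constant). Handling $\tau$ and the weights $\omega$ simultaneously inside the Holmstr\"{o}m/revenue-equivalence machinery is the fiddliest part; everything else is a direct transcription of the arguments in Theorem~\ref{theorem:rev_optimal}.
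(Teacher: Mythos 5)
Your proposal is correct and follows essentially the same route as the paper: the paper's own proof simply invokes an ``identical application of revenue equivalence'' as in Theorem~\ref{theorem:rev_optimal}, with the key observation that the weakest-type AM payment leaves the weakest type's IR constraint binding, so any payment rule extracting strictly more revenue must violate IR at a profile $(\ttheta_{i^*},\btheta_{-i^*})$ --- exactly the argument you spell out (your added details on the Groves-style IC/IR verification, the $\omega_i=0$ degeneracy, and the Holmstr\"{o}m alternative are faithful elaborations of the same reasoning rather than a different approach).
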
 
 
 \begin{proof}
     The proof involves an identical application of revenue equivalence as in the proof of Theorem~\ref{theorem:rev_optimal}. The key property required is that weakest-type AMA payment leaves the ``affine" weakest type with zero utility, that is, the weakest type's IR constraint is binding, which is immediate from the payment formula. (Here, the weakest type is the type that minimizes the efficient {\em affine} welfare.) Therefore, any payment rule that generates strictly more revenue must violate individual rationality on some type profile of the form $(\ttheta_i,\btheta_{-i})$, where $\ttheta_i$ minimizes $w(\ttheta_i,\btheta_{-i})$ over all $\ttheta_i$ such that $(\ttheta_i,\btheta_{-i})\in\bTheta_i$. 
 \end{proof}
 
Let $\OPT(\omega,\lambda) = \sum_{i=1}^n\theta_i[\alpha_{\omega,\lambda}]$ be the welfare of the $(\omega,\lambda)$-efficient allocation. All of the guarantees satisfied by $\cM$ carry over to $\cM(\omega,\lambda)$, the only difference being the modified benchmark of $\OPT(\omega,\lambda)$. Of course, $\OPT(\omega,\lambda)\le\OPT$ is a weaker benchmark than the welfare of the efficient allocation. However, the class of affine maximizer mechanisms is known to achieve much higher revenue than the vanilla VCG mechanism. We leave it as a compelling open question to derive even stronger guarantees on mechanisms of the form $\cM(\omega,\lambda)$ when the underlying affine maximizer is known to achieve greater revenue than vanilla VCG. In any case, if one has a tuned high-revenue AM on hand~\citep{curry2023differentiable}, our techniques (weakest-type AM and randomization) can be appended as a post-processor to further improve revenue.

\section{Conclusions and future research}

We developed a versatile new methodology for multidimensional mechanism design that incorporates side information about agent types with the bicriteria goal of generating high social welfare and high revenue simultaneously. We designed mechanisms for a variety of side information formats. Our starting point was the {\em weakest-type VCG mechanism}, which generalized the mechanism of~\citet{krishna1998efficient}. A randomized tunable version of that mechanism achieved strong welfare and revenue guarantees that were parameterized by errors in the side information, and could be tuned to boost its performance. We additionally applied the weakest-type mechanism to three other forms of side information: predictions that could express uncertainty, agent types known to lie on low-dimensional subspaces of the ambient type space, and a textbook mechanism design setting where the side information is in the form of a known prior distribution over agent types. Finally, we showed how to generalize our main results to affine-maximizer mechanisms.

There are many new research directions that stem from our work. For example, how far off are our mechanisms from the welfare-versus-revenue Pareto frontier? The weakest-type VCG mechanism is one extreme point, but what does the rest of the frontier look like? One possible approach here would be to extend our theory beyond VCG to the larger class of affine maximizers (which are known to contain higher-revenue mechanisms)---we provided some initial ideas in Section~\ref{sec:amas} but that is only a first step. 

{\em Computation and practical auction design:} An important facet that we have only briefly discussed is computational complexity. The computations in our mechanism involving weakest types scale with the description complexity of $T_i(\btheta_{-i})$ (\emph{e.g.}, the number of constraints, the complexity of constraints, and so on). An important question here is to understand the computational complexity of our mechanisms as a function of the differing (potentially problem-specific) language structures used to describe the predictors $T_i(\btheta_{-i})$. (\citet{prasad2026weakest} have taken steps in this direction in the setting that the $T_i(\btheta_{-i})$ described by linear constraints.) In particular, the kinds of side information that are accurate, natural/interpretable, and easy to describe might depend on the specific mechanism design domain. Expressive bidding languages for combinatorial auctions have been extensively studied with massive impact in practice~\citep{Sandholm07:Expressive,Sandholm13:Very}. Can a similar methodology be developed for side information? Another important direction here is the exploration of the weakest type idea in the realm of mechanism design problems with additional practically-relevant constraints. Examples include 
core-selecting combinatorial auctions~\citep{Day07:Fair,Day08:Core,day2012quadratic,prasad2026weakest},
mechanism design with investment incentives~\citep{akbarpour2021investment}, obviously strategy-proof mechanisms~\citep{li2017obviously}, and other concrete market design applications like sourcing~\citep{Sandholm13:Very}, catch-share reallocation to prevent overfishing~\citep{bichler2019designing}, spectrum auctions~\citep{goetzendorff2015compact,leyton2017economics}, and electricity market design~\citep{cramton2017electricity}.

{\em Improved revenue when there is a known prior:} Another direction is to improve the revenue of the Bayesian weakest-type VCG mechanism of Krishna and Perry when there is a known prior over agents' types. Here, the benchmark would be efficient welfare in expectation over the prior. Krishna and Perry's mechanism uses weakest types with respect to the prior's support to guarantee efficient welfare in expectation, but its revenue could potentially be boosted significantly by compromising on welfare as in our random expansion mechanism. We took some first steps in Section~\ref{sec:other_forms+myerson}, but many open questions remain. For example, is there a closed form characterization of the revenue-maximizing Groves mechanism? Can those ideas be applied to revenue optimization of weakest-type affine maximizer mechanisms? Another direction here is to study the setting when the given prior might be inaccurate. Can our random expansion mechanism be used to derive guarantees that depend on the closeness of the given prior to the true prior? Such questions are thematically related to {\em robust mechanism design}~\citep{bergemann2005robust}. Another direction along this vein is to generalize our mechanisms to depend on a known prior over prediction errors. 

Finally, the weakest-type VCG mechanism is a strict improvement over the vanilla VCG mechanism, yet it appears to have been seldom studied or applied since its derivation by Krishna and Perry. The potential application of the weakest-type paradigm to other pressing questions in mechanism design and beyond is an exciting direction for future research.

\subsection*{Acknowledgements}
This work was supported by Office of Naval Research and the Vannevar Bush Faculty Fellowship [Grant ONR N00014-23-1-2876], the Army Research Office [Award W911NF2210266], the
Simons Foundation [Award MPS-SICS-00826333], the National Science Foundation [Grants CCF1733556, CCF-1910321, IIS-1901403, RI-1901403, RI-2312342, and SES-1919453], the National Institutes of Health [Grant A240108S001], and the Defense Advanced Research Projects Agency [Grant
HR00112020003]. The authors thank the anonymous reviewers for their valuable suggestions, which substantially improved this manuscript. Most of this research was performed while Siddharth Prasad was at Carnegie Mellon University.

\appendix

\section{A class of predictors that cannot generate any additional revenue for a welfare consistent mechanism}

The following proposition shows that predictors that make purely qualitative statements about the efficient allocation cannot be used by a welfare consistent mechanism to generate any revenue beyond that of VCG.
\begin{proposition}\label{prop:qualitative}
    Let $\widetilde{\Gamma}\subset\Gamma$ be any set of allocations. Let $$T_i(\btheta_{-i}) = \left\{\widehat{\theta}_i : \text{ the efficient allocation on } (\widehat{\theta}_i,\btheta_{-i}) \text{ is in }\widetilde{\Gamma}\right\}.$$ The revenue of any welfare consistent mechanism given predictors $\{T_i\}$ as input is no more than that of VCG.
\end{proposition}
\begin{proof}
    Let $\alpha\in\widetilde{\Gamma}$ and consider the type $\ttheta_i\in\R^{\Gamma}$ defined by $$\ttheta_i[\alpha] = w(0,\btheta_{-i}) - \sum_{j\neq i}\theta_j[\alpha]$$ and $\ttheta_i[\alpha']=0$ for all $\alpha'\in\Gamma\setminus\{\alpha\}$. The efficient allocation on type profile $(\ttheta_i,\btheta_{-i})$ is $\alpha$ and the efficient welfare is $w(0, \btheta_{-i})$. Indeed, $$\ttheta_i[\alpha] + \sum_{j\neq i}\theta_j[\alpha] = w(0, \btheta_{-i}) \ge \ttheta_i[\alpha'] + \sum_{j\neq i}\theta_j[\alpha']$$ for any other  allocation $\alpha'\in\Gamma$. So, $\ttheta_i\in T_i(\btheta_{-i})$ and generates the smallest possible efficient welfare $w(0,\btheta_{-i})$, which shows that $\min_{\ttheta_i\in T_i(\btheta_{-i})}w(\ttheta_i,\btheta_{-i}) = w(0, \btheta_{-i})$. Any perfectly consistent mechanism is IC, IR, and efficient on type domain $T_i(\btheta_{-i})$, and it is straightfoward to see that $T_i(\btheta_{-i})$ is connected, so by Theorem~\ref{theorem:rev_optimal} (see~\citet{prasad2025revenue} for an extension to connected sets, relaxing convexity) its revenue is at most that of vanilla VCG.\qed
\end{proof}
In particular, a predictor that correctly guesses the efficient allocation is too weak to guarantee that any given agent generates any additional welfare for the system. Some other examples of predictions that are of the form prescribed by Propostion~\ref{prop:qualitative} include $$T_i(\btheta_{-i}) = \{\widehat{\theta}_i : \text{agent } i \text{ wins bundle B in the efficient allocation}\}$$ or $$T_i(\btheta_{-i}) = \{\widehat{\theta}_i : \text{ items X and Y are sold separately in the efficient allocation}\}.$$ Predictions that make {\em quantitative} statements about the efficient welfare are useful in our framework. For example,  predictors $T_i(\btheta_{-i}) = \{\widehat{\theta}_i : w(\widehat{\theta}_i,\btheta_{-i}) = \OPT\}$ that guess the efficient welfare correctly generate welfare and revenue equal to $\OPT$ when plugged into Theorem~\ref{theorem:revenue_characterization}.

\small
\bibliography{references,dairefs}
\bibliographystyle{plainnat}

\end{document}